\theoremstyle{plain}
\newtheorem{thm}{Theorem}[chapter]
\newtheorem{example}[thm]{Example}
\newtheorem{prop}[thm]{Proposition}
\newtheorem{conjecture}[thm]{Conjecture}
\newtheorem{problem}[thm]{Problem}
\newtheorem{question}[thm]{Question}
\newcommand{\N}{\mathbb{N}}
\newcommand{\Z}{\mathbb{Z}}
\renewcommand{\hat}[1]{\widehat{#1}}
\begin{document}

\frontmatter

%*~*~*~*~*~*~*~*~*~*~*~*~*~*~*~*~*~*~*~*~*~*~*~*~*~*~*~*~*~*~*~*~*~*~*~*~*~*~*~*~*~*~*~*~*~*~*~*~*~*~*~*~*
%                                                                                                        %
%                                     T  I  T  L  E  *  P  A  G  E                                       %
%                                                                                                        %

%*~*~*~*~*~*~*~*~*~*~*~*~*~*~*~*~*~*~*~*~*~*~*~*~*~*~*~*~*~*~*~*~*~*~*~*~*~*~*~*~*~*~*~*~*~*~*~*~*~*~*~*~*
\begin{titlepage}
\begin{center}
{ \LARGE \textbf{ Traversals of Infinite Graphs with Random Local Orientations}}%\\ \vspace*{5mm}
%                          Over Function Fields}}

\vspace*{20mm} {\large { By\\
                       David White}}

\vspace*{25mm} {\large { Adviser: Danny Krizanc, \\
                         Professor of Computer Science}}\\

\vspace*{20mm} {\normalsize { Wesleyan University \\
                         Middletown, CT \\
                         May 2012}}\\

\vspace*{20mm}{\normalsize \em{A Thesis in Computer Science\\
              submitted in partial fulfillment of the\\
              requirements for the degree of Master of Arts}}
\end{center}
\end{titlepage}

%*~*~*~*~*~*~*~*~*~*~*~*~*~*~*~*~*~*~*~*~*~*~*~*~*~*~*~*~*~*~*~*~*~*~*~*~*~*~*~*~*~*~*~*~*~*~*~*~*~*~*~*~*
%                                                                                                        %
%                                      D  E  D  I  C  A  T  I  O  N                                      %
%                                                                                                        %

%*~*~*~*~*~*~*~*~*~*~*~*~*~*~*~*~*~*~*~*~*~*~*~*~*~*~*~*~*~*~*~*~*~*~*~*~*~*~*~*~*~*~*~*~*~*~*~*~*~*~*~*~*
%{\begin{flushright} \vspace*{40mm}

% For [insert dedication]
%\end{flushright}}

%*~*~*~*~*~*~*~*~*~*~*~*~*~*~*~*~*~*~*~*~*~*~*~*~*~*~*~*~*~*~*~*~*~*~*~*~*~*~*~*~*~*~*~*~*~*~*~*~*~*~*~*~*
%                                                                                                        %
%                                        A  B  S  T  R  A  C  T                                          %
%                                                                                                        %

%*~*~*~*~*~*~*~*~*~*~*~*~*~*~*~*~*~*~*~*~*~*~*~*~*~*~*~*~*~*~*~*~*~*~*~*~*~*~*~*~*~*~*~*~*~*~*~*~*~*~*~*~*
\chapter*{Abstract}
We introduce the notion of a random basic walk on an infinite graph, give numerous examples, list potential applications, and provide detailed comparisons between the random basic walk and existing generalizations of simple random walks. We define analogues in the setting of random basic walks of the notions of recurrence and transience in the theory of simple random walks, and we study the question of which graphs have a cycling random basic walk and which a transient random basic walk.

We prove that cycles of arbitrary length are possible in any regular graph, but that they are unlikely. We give upper bounds on the expected number of vertices a random basic walk will visit on the infinite graphs studied and on their finite analogues of sufficiently large size. We then study random basic walks on complete graphs, and prove that the class of complete graphs has random basic walks asymptotically visit a constant fraction of the nodes. We end with numerous conjectures and problems for future study, as well as ideas for how to approach these problems.

%*~*~*~*~*~*~*~*~*~*~*~*~*~*~*~*~*~*~*~*~*~*~*~*~*~*~*~*~*~*~*~*~*~*~*~*~*~*~*~*~*~*~*~*~*~*~*~*~*~*~*~*~*
%                                                                                                        %
%                             A  C  K  N  O  W  L  E  D  G  E  M  E  N  T  S                             %
%                                                                                                        %
%*~*~*~*~*~*~*~*~*~*~*~*~*~*~*~*~*~*~*~*~*~*~*~*~*~*~*~*~*~*~*~*~*~*~*~*~*~*~*~*~*~*~*~*~*~*~*~*~*~*~*~*~*

\chapter*{Acknowledgements}
%\linespread{1.5}
%\renewcommand{\baselinestretch}{1.5}
%\onehalfspacing

First, I give my sincere thanks to my advisor, Danny Krizanc, for his advice, encouragement, fascinating questions and conversations, for helping me find this thesis topic, and for his guidance as I wrote this document. It has been a great joy to work with him. Thanks also go to the other members of my committee, Eric Aaron and Mike Keane, for their feedback on my thesis.

I would like to thank Michel Dekking and Mike Keane for helping me understand this problem, teaching me background material on random walks, and for asking questions which guided me to the results in Chapter \ref{sec:locallyfinite}. Similarly, I thank Sunil Shende for his help with making the experiments which led to the statements of the theorem and conjecture in Chapter \ref{sec:complete}, and I thank Leszek Gasieniec for an advance preprint of \cite{supersize}.

There are other members of the Wesleyan community who I also must thank, for without them I would have never been able to complete this thesis. To my fellow grad students for all their support over the years and for making Middletown a better place to live. To Mike Rice for showing me that being a mathematician and doing computer science research need not be mutually exclusive, and for helping me figure out how to balance these two passions. To Mark Hovey for his mentoring, his support, and for teaching me how to do research at the PhD level. I feel truly blessed to have been mentored by two such incredible individuals as Professors Hovey and Krizanc.

Finally, I need to thank my family for their unending support over the years. To my mother, Vicki Quade, for always being willing to listen and give advice. To my father, Charlie White, for his wisdom and calm support as I find my way through graduate school. Without my family, I would not be the person I am today, and I thank them most of all for that.

%\doublespacing
%\linespread{1.655}
%\renewcommand{\baselinestretch}{2}
%\frontmatter \setcounter{page}{0} \pagenumbering{roman}

%*~*~*~*~*~*~*~*~*~*~*~*~*~*~*~*~*~*~*~*~*~*~*~*~*~*~*~*~*~*~*~*~*~*~*~*~*~*~*~*~*~*~*~*~*~*~*~*~*~*~*~*~*
%                                                                                                        %
%                            T  A  B  L  E  *  O  F  *  C  O  N  T  E  N  T  S                           %
%                                                                                                        %
%*~*~*~*~*~*~*~*~*~*~*~*~*~*~*~*~*~*~*~*~*~*~*~*~*~*~*~*~*~*~*~*~*~*~*~*~*~*~*~*~*~*~*~*~*~*~*~*~*~*~*~*~*
\tableofcontents %\swapnumbers

\mainmatter \pagenumbering{arabic}

%*~*~*~*~*~*~*~*~*~*~*~*~*~*~*~*~*~*~*~*~*~*~*~*~*~*~*~*~*~*~*~*~*~*~*~*~*~*~*~*~*~*~*~*~*~*~*~*~*~*~*~*~*
%                                                                                                        %
%                                   I  N  T  R  O  D  U  C  T  I  O  N                                   %
%                                                                                                        %
%*~*~*~*~*~*~*~*~*~*~*~*~*~*~*~*~*~*~*~*~*~*~*~*~*~*~*~*~*~*~*~*~*~*~*~*~*~*~*~*~*~*~*~*~*~*~*~*~*~*~*~*~*
\chapter{Introduction}
\label{introduction}

Many places in computer science it is valuable to represent a system as a graph and to represent change in the system as a random process on that graph. For instance, the nodes of the graph might be positions and the edges might be allowable directions to move, or the nodes might be computers in a network and the edges might be routes by which information can travel.

The most well-studied random process on a graph is the simple random walk, where directions to move or routes for information travel are chosen uniformly at random from the set of possible options. When the graph in question is infinite, an important question is whether the simple random walk is recurrent or whether it is transient (these terms are defined in Section 1.\ref{sec:randomwalks}). Numerous generalizations of simple random walks have been studied over the years, and this question is always a driving force behind the research into these generalizations.

In recent years researchers have begun to study a random process called the random basic walk (discussed in Section 1.\ref{sec:basic}), which is related to the simple random walk. The random basic walk may be applied to both of the examples mentioned above. Existing literature only studies the random basic walk on finite graphs, with a particular focus on grids of the form $G_{k,n}$, with $k$ rows and $n$ columns.

We will study the random basic walk on infinite grids and then on much more general infinite graphs. We first define the properties analogous to recurrence and transience (called cycling and transience respectively) and give several examples to demonstrate the types of interesting behavior which can occur in a random basic walk. We then study the question of which graphs have a cyclic random basic walk and which have a transient random basic walk, focusing in the process on the dichotomy between graphs of bounded degree and graphs of unbounded degree. En route we prove that cycles of arbitrary length are possible for regular graphs, but are unlikely. We end by returning to the question of the random basic walk on finite graphs, including new results on complete graphs $K_n$ which were suggested by our proof methods in the infinite situation.

\section{Background on Simple Random Walks}
\label{sec:randomwalks}

Simple random walks have been studied since at least 1905 (\cite{pearson}), and there are numerous sources which treat them (e.g. \cite{snell, Spit}). We include the basic definitions here for completeness. Let $G$ be a connected graph with countably many vertices and edges, and assume $G$ is locally finite, i.e. every vertex has only finitely many neighbors. We now describe a process which creates a path in $G$. Select a starting vertex $v_0$. Select a neighbor $v_1$ of $v_0$ uniformly at random, i.e. the probability a neighbor will be selected is $1/deg(v_0)$. Next, select a neighbor $v_2$ of $v_1$ uniformly at random, and continue in this way. This process results in a path $v=(v_0,v_1,v_2,\dots)$, where for every $i$, there is an edge $v_i \sim v_{i+1}$. Let $\Omega = \{(v_0,v_1,\dots)\;|\; \forall i \; v_i \sim v_{i+1}\}$ be the set of such paths with starting vertex $v_0$. It is clear from our construction how to compute the probability of any given finite path $\gamma$:

$$P(\gamma = (v_0,v_1,\dots,v_n)) = \left(\frac{1}{deg(v_0)}\right)\left(\frac{1}{deg(v_1)}\right) \cdots \left(\frac{1}{deg(v_{n-1})}\right)$$

There is a $\sigma$-algebra $\mathfrak{a}$ on $\Omega$ generated by these finite paths (i.e. the cylinder sets). Equivalently, $\mathfrak{a}$ is given by the Borel subsets of the compact space $\Omega$. This $\sigma$-algebra makes $P$ into a probability measure on $\Omega$. A \emph{simple random walk} is a tuple $(G,\Omega,\mathfrak{a}, P)$ constructed as above. We will sometimes abuse notation and refer to an individual path $v\in \Omega$ as a simple random walk, since this is the walk an individual particle takes.

A simple random walk $(v_0,v_1,\dots)$ is \emph{recurrent} if $P(\exists \; n\geq 1\;|\; v_n=v_0)=1$, i.e. if it returns to the starting vertex at some point. A simple random walk is \emph{transient} if it is not recurrent. The term transient is used because at every step the walk has a non-zero probability of escaping to infinity (i.e. of never returning). The following statements are well-known:

\begin{prop}
\begin{enumerate}
\item Recurrence is equivalent to the statement that with probability 1 the simple random walk returns to $v_0$ infinitely many times, since a simple random walk is Markovian.
\item Recurrence is equivalent to the statement that with probability 1 every vertex is visited infinitely often, by a Borel-Cantelli argument.
\item Transience is equivalent to the statement that with probability 1 every vertex is visited only finitely often.
\item A simple random walk on any connected graph is either recurrent or transient.
\end{enumerate}
\end{prop}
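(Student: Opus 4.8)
The plan is to deduce all four parts from a single tool, the \emph{excursion decomposition} of the walk at a fixed vertex, which rests on the Markov property, and then to upgrade "the starting vertex" to "every vertex" using connectedness of $G$. Begin with part (1): fix the start $v_0$ and set $p := P(v_n = v_0 \text{ for some } n \geq 1)$, the return probability. Each time the walk visits $v_0$ the Markov property says the continuation is an independent copy of the walk started at $v_0$, so the successive return times cut the path into independent, identically distributed excursions. Letting $R \in \{0,1,2,\dots\} \cup \{\infty\}$ be the number of returns to $v_0$, we get $P(R \geq k) = p^{\,k}$ for every finite $k$. Hence if $p < 1$ then $R < \infty$ almost surely, while if $p = 1$ then $P(R = \infty) = \lim_k P(R \geq k) = 1$. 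This is exactly the equivalence in (1), and it also records that "$v_0$ is visited only finitely often" has probability $1$ precisely when $p < 1$.

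For part (2), first observe that recurrence of $v_0$ forces every vertex $w$ to be visited infinitely often. Since $G$ is connected and locally finite, there is a finite path from $v_0$ to $w$, and the finite-path probability formula of Section 1.\ref{sec:randomwalks} assigns it some probability $\delta_w > 0$ of being traced out during a given excursion from $v_0$. Under recurrence there are infinitely many excursions, they are i.i.d., and the events $A_j = \{\text{the } j\text{th excursion passes through } w\}$ are independent with $P(A_j) \geq \delta_w$; by the (second) Borel--Cantelli lemma, almost surely infinitely many $A_j$ occur, so $w$ is visited infinitely often. Conversely, if every vertex is visited infinitely often with probability $1$ then in particular $v_0$ is, which is recurrence by part (1). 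This proves (2).

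For part (3), transience is by definition the negation of recurrence, so by (2) it is equivalent to $p = P(\text{return to } v_0) < 1$, which by part (1) says $v_0$ is visited only finitely often almost surely. Now take any vertex $w$ and suppose, for contradiction, that $w$ is visited infinitely often with positive probability; on that event the walk makes infinitely many i.i.d.\ excursions from $w$, each of which, by connectedness, independently passes through $v_0$ with probability at least some $\delta > 0$, so $v_0$ would be visited infinitely often with positive probability, contradicting the previous sentence. Hence every vertex is visited only finitely often almost surely. Part (4) is then immediate: for a fixed start either $p = 1$ or $p < 1$, and these are exactly recurrence and transience; moreover (2) and (3) show the dichotomy is independent of the chosen starting vertex, since under recurrence any vertex — hence any alternative start — is visited infinitely often, and symmetrically under transience.

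The main obstacle is making the excursion decomposition fully rigorous: one must apply the (strong) Markov property at the \emph{random} successive return times, verify that the resulting excursions really are independent and identically distributed (including the bookkeeping that the $(k+1)$st excursion is only defined on the event that the $k$th returns), and ensure Borel--Cantelli is applied to the genuinely independent per-excursion hitting events $A_j$ rather than to the dependent events $\{v_n = w\}$. Once that machinery is set up, the remaining work is routine estimation with the finite-path probability formula and the observation that connectedness of $G$ makes the relevant hitting probabilities strictly positive.
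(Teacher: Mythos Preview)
The paper does not actually prove this proposition: it is stated as ``well-known'' background, with the proof ideas embedded as parenthetical hints in the statement itself (``since a simple random walk is Markovian'', ``by a Borel--Cantelli argument''). So there is no paper proof to compare against beyond those hints.

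Your argument is correct and is exactly the standard one the hints point to: the geometric excursion count from the strong Markov property gives (1), independence of excursions plus second Borel--Cantelli gives (2), and (3)--(4) follow by the contrapositive and connectedness. The one place to be slightly more careful than you were is in (3): saying ``on that event the walk makes infinitely many i.i.d.\ excursions from $w$'' glosses over the fact that conditioning on $\{w$ visited i.o.$\}$ could in principle spoil independence. The clean way through is to observe that if $w$ is visited infinitely often with positive probability then the return probability to $w$ from $w$ must equal $1$ (otherwise the number of visits is a.s.\ finite by the same geometric argument as in (1)), so $w$ is itself a recurrent starting state, and then part (2) applied from $w$ forces $v_0$ to be visited infinitely often, contradicting transience. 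You essentially say this, and you flag the bookkeeping issue in your final paragraph, so the proposal is sound.
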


In 1921, Georg P\'{o}lya \cite{Polya} famously studied simple random walks on lattices $\Z^d$ and characterized when such walks are recurrent and when they are transient:

\begin{thm}[P\'{o}lya's Theorem]
\label{polya}
The simple random walk on $\Z^d$ is recurrent for $d\leq 2$, but transient for all $d>2$.
\end{thm}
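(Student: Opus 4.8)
The plan is to reduce P\'olya's dichotomy to the convergence of a single series. For the walk on $\Z^d$, write $p_n$ for the probability that the walk started at the origin $0$ is back at $0$ after $n$ steps, and put $G = \sum_{n\ge 0} p_n$, the \emph{Green's function at the origin} --- equivalently, the expected number of visits to $0$. The first step is the standard fact that, by the Markov property (cf.\ part (1) of the Proposition above) together with the translation invariance of the walk on $\Z^d$, the number of returns to $0$ is geometrically distributed with ``success'' probability $1-f$, where $f = P(\exists\, n\ge 1: v_n = v_0)$ is the return probability; hence $G = \sum_{k\ge 0} f^k$, which is $1/(1-f)$ if $f<1$ and $+\infty$ if $f = 1$. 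Thus \emph{the walk is recurrent if and only if $G = \infty$}, and P\'olya's Theorem becomes the claim that $\sum_n p_n$ diverges for $d\le 2$ and converges for $d\ge 3$. Since $\Z^d$ is bipartite, $p_n = 0$ for odd $n$, so it suffices to understand $p_{2n}$ as $n\to\infty$.

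Next I would carry out three estimates. For $d=1$, the exact formula $p_{2n} = \binom{2n}{n}2^{-2n}$ together with Stirling's approximation gives $p_{2n}\sim(\pi n)^{-1/2}$, so $\sum_n p_{2n} = \infty$. For $d=2$, apply the coordinate change $(x,y)\mapsto(x+y,\,x-y)$: the four steps $\pm e_1,\pm e_2$ become $(\pm1,\pm1)$ with all four sign patterns equally likely, so the two new coordinates evolve as \emph{independent} one-dimensional simple random walks, and the origin still corresponds to the origin; hence $p_{2n} = \bigl(\binom{2n}{n}2^{-2n}\bigr)^2\sim(\pi n)^{-1}$ and again $\sum_n p_{2n} = \infty$, proving recurrence for $d\le 2$. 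For $d\ge 3$, the target is the upper bound $p_{2n} = O(n^{-d/2})$, which is summable precisely when $d\ge 3$. Counting returns by the numbers $k_i$ of $\pm e_i$-pairs taken gives
\[ p_{2n} = \frac{1}{(2d)^{2n}}\binom{2n}{n}\sum_{k_1+\cdots+k_d=n}\binom{n}{k_1,\dots,k_d}^2 \;\le\; \frac{\binom{2n}{n}}{(2d)^{2n}}\,\Bigl(\max_{k_1+\cdots+k_d=n}\binom{n}{k_1,\dots,k_d}\Bigr)\,d^n, \]
using $\sum_{\vec k}\binom{n}{\vec k} = d^n$; then $\binom{2n}{n} = O(4^n n^{-1/2})$ and the Stirling bound $\max_{\vec k}\binom{n}{\vec k} = O(d^n n^{-(d-1)/2})$ (the maximum occurring near $k_i = n/d$) combine --- using $(2d)^{2n} = 4^n d^{2n}$ --- to yield $p_{2n} = O(n^{-1/2}\,n^{-(d-1)/2}) = O(n^{-d/2})$, so $\sum_n p_{2n}<\infty$ and the walk is transient. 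A slicker alternative handling all $d$ at once is the Fourier representation $G = (2\pi)^{-d}\int_{[-\pi,\pi]^d}(1-\phi(\theta))^{-1}\,d\theta$ with $\phi(\theta) = \frac1d\sum_j\cos\theta_j$: since $1-\phi(\theta)\asymp|\theta|^2$ near its only zero $\theta = 0$, finiteness of $G$ is equivalent to convergence of $\int_{|\theta|<\eps}|\theta|^{-2}\,d\theta$, which holds if and only if $d\ge 3$.

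The main obstacle is the $d\ge 3$ case. In the combinatorial route, the work is in extracting $\max_{\vec k}\binom{n}{\vec k} = O(d^n n^{-(d-1)/2})$ from Stirling uniformly in $n$ (in particular handling $n\not\equiv 0\pmod d$, where the maximizing $\vec k$ is only approximately $(n/d,\dots,n/d)$). In the Fourier route, one must justify passing from $\sum_n r^n\phi(\theta)^n$ to $G$ as $r\uparrow 1$ (a monotone/dominated convergence argument after controlling the integrand) and must verify that the parity point $\theta=(\pi,\dots,\pi)$, where $\phi=-1$, contributes no divergence to $(1-\phi)^{-1}$. By contrast, the cases $d\le 2$ are exact computations once the $45^\circ$ rotation trick is in hand.
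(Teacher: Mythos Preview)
The paper does not actually prove this theorem; it is quoted as classical background, with a one-sentence remark that P\'olya's original proof ``proceeded by writing out and bounding the probabilities of certain paths in $\Omega$'' and pointers to the martingale proof in Doob and the electrical-network proof in Doyle--Snell. There is therefore nothing in the paper to compare your argument against line by line.

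That said, your proposal is a correct and complete proof. Your route --- the Green's function criterion $G=\sum_n p_n=\infty\Leftrightarrow$ recurrence, the exact Stirling computation for $d=1$, the $45^\circ$-rotation decoupling for $d=2$, and the multinomial bound (or, alternatively, the Fourier integral) for $d\ge 3$ --- is exactly the classical combinatorial approach the paper alludes to as P\'olya's original one. It is more computational than the electrical-network or martingale arguments the paper cites as ``modern,'' but it is self-contained and matches the spirit of the later chapters of the paper, which also bound path probabilities directly. Your own caveats about the uniformity of the Stirling estimate on $\max_{\vec k}\binom{n}{\vec k}$ and the monotone-convergence justification in the Fourier route are the right places to be careful, but neither is a gap: both are routine once flagged.
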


P\'{o}lya's original proof proceeded by writing out and bounding the probabilities of certain paths in $\Omega$. Modern proofs have removed much of this computational aspect. For example, there is an elegant proof using Martingales in \cite{Doob}, and there is a very clever proof using the theory of electrical networks in \cite{snell}. We will prove an analogue of this theorem for the random basic walk in Section \ref{ch:infinite}.\ref{sec:lattices}, and the proof will be more along the lines of P\'{o}lya's original proof, except that the bounds will be much easier to compute.

Simple random walks are also studied on finite graphs (see \cite{lovasz}), and the main questions there regard cover time, hitting time, mixing time, and load balancing. The cover time is the expected amount of time it takes a simple random walk to visit all vertices. The partial cover time is the expected amount of time it takes to visit a constant fraction of the nodes. The edge cover time measures the expected number of steps until all edges have been traversed. The analogues for these three concepts on random basic walks will be studied in Chapter \ref{sec:complete}. The hitting time is the expected time of first return to the starting vertex. This concept does not appear to have a natural generalization to random basic walks, since they need not ever return to the starting vertex, even on a finite graph. The mixing time measures how quickly the simple random walk converges to its limit distribution. This concept has not been studied at all for the random basic walk. Load balancing refers to how many times an individual node is visited relative to the number of times other nodes are visited. This concept has not been studied for the random basic walk.

The finite analogue of $\Z^2$ is the graph $[n]\times [n]$ where $[n] = \{1,2,\dots,n\}$. We will discuss this graph more in the next section, but for now we simply record that the hitting time of a simple random walk on $[n]\times[n]$ is $O(|V|\log^2(|V|))$ according to Section 11.3.2 of \cite{mixing}. For general graphs, the cover time has been bounded by $O(|V||E|)$ in \cite{covering2}. For regular graphs this has been improved to $O(|V|^2)$ in \cite{covering10}. For regular expander graphs this has been improved to $\Theta(|V|\log|V|)$ in \cite{covering3}. For an arbitrary graph, a general lower bound of $(1-o(1))|V|\ln|V|$ has been obtained in \cite{covering11}. The edge cover time has been proven to be at least $\Omega(|E|\log|E|)$ and at most $O(|V||E|)$ in \cite{covering9, covering12}. See also: \cite{covering1, covering4, covering5, covering6, covering7, covering8}.

\section{Background on Basic Walks}
\label{sec:basic}

A common problem in computer science is that of graph exploration by a mobile entity. For example, software moving on a network of computers or a webcrawler moving on the internet graph. An example which we will carry with us throughout this document is that of a robot exploring an unknown terrain. This problem has been studied for many years, and several deterministic answers have been proposed. However, many of these solutions depend on sophisticated sensors to guide the robot or depend on sophisticated algorithms to direct the robot's movement. In practice, these solutions can be infeasible if one cannot afford the expensive sensors (e.g. GPS, infrared sensors, ultrasound sensors), if time and space constraints inside the robot make the algorithms infeasible, or if the area to be explored is quite large. For this reason, attention has shifted recently to solutions making use of randomness. Random solutions often give suboptimal performance, but with significant savings on time and space requirements.

In recent years a robot vacuum cleaner called the iRobot Roomba has become popular, but these robots sometimes get trapped in corners or behind furniture. Makers of the Roomba wish it to explore the entire room without getting trapped and without covering the same ground too many times. A discrete approximation to the continuous setting of the room is a graph, where the vertices are allowable positions of the robot and edges are allowable directions to move. Since most rooms are rectangular-shaped, it is natural to study this process on a two dimensional grid $G_{k,n}$ with $k$ rows and $n$ columns, as is done for instance in \cite{Sir07}. Letting $k$ and $n$ go to infinity yields the graph $\Z^2$ and lattices $\Z^d$ more generally. Research on these graphs answers asymptotic questions on $G_{k,n}$ and also a better approximation to the continuous situation via taking a finer and finer mesh on $[n]\times [n]$. The graphs $G_{k,n}$ and $\Z^2$ represent empty rooms. To study graph exploration in the presence of furniture and other obstacles one must study more general graphs.

The question remains of how the robot chooses which edge to follow from a given vertex. If the robot make a choice of direction uniformly at random from the set of possible directions, then the path of the robot will be a simple random walk. However, this decision scheme does not make use of the existence of memory in robots, so it is natural to wonder if one can do better. Many different decision models have been proposed, e.g. in \cite{robot1, robot2, robot3, robot4, robot5, robot7, robot14, robot8, robot9, robot10, robot11, robot12, robot13}. We will focus on the model considered in \cite{Sir09, Sir07, robot15}, which helps the robot by placing pointers on all edges and which makes use of only a constant number of bits of robot memory. In this model $G$ is a directed graph such that whenever there is an arc $v\to w$ there is also an arc $w\to v$.

We now describe this model. At every vertex $v$ order and label the outgoing arcs $\{e_1,e_2,\dots,e_{deg(v)}\}$ by consecutive integers $1,2,\dots,deg(v)$. When the robot enters $v$ by an arc labeled $i$ it will exit by the arc labeled $(i \mod{deg(v)})+1$. Denote the label of $e_i$ by $L(e_i)$. The collection of labels is sometimes referred to as the \emph{port numbers} at $v$, and the configuration of outgoing arcs and labels is the \emph{port orientation} at $v$. The collection of port orientations at all vertices is called a \emph{labeling }on the graph $G$. It is worth noting that the labels on arcs $v\to w$ and $w\to v$ need not match. Once a labeling has been assigned a robot is placed on the graph, i.e. given a starting location $v$ and a starting port $i$. The path of the robot is called a \emph{basic walk}. The basic walk can be thought of as an automaton, with states $(v,i)$ where $v$ is the current position and $i$ is the next port to use. The transition function takes $(v,i)$ to $(w,j)$ where $w$ is the vertex which $i$ points to and $j$ is $(i\mod deg(w))+1$.

One immediate observation about the basic walk is that when vertices are visited multiple times it is possible for the robot to get trapped in a cycle (see Example \ref{latticetraps}). In \cite{Sir09} and \cite{Sir07} the authors address the question of how to find a good labeling on a finite graph so that all nodes are visited in a periodic manner. They model the robot as a finite state automaton and therefore \cite{budach} shows that there are labelings which cause the robot to fail to visit all vertices. The main result of \cite{Sir09} is an algorithm to set the port orientations at all vertices to create a period of $O(3.5*|V|)$ with which the robot explores the graph, an improvement over the simple random walk's cover time. However, this paper requires the port numbers to be updated while the walk is occurring and thus forces the individual nodes in the graph to do computation, provide memory storage, and give the robot much more information than simply the local port orientation.

One way to reduce the amount of computation above is to set the port numbers uniformly at random before the walk starts, i.e. select label $L(e_1)$ uniformly at random from $\{1,2,\dots,deg(v)\}$, then select $L(e_2)$ from $\{1,2,\dots,deg(v)\}-\{L(e_1)\}$, and continue in this way until all arcs are labeled. With this labeling the path of the robot is called a \emph{random basic walk}, and was first considered in \cite{supersize}. It does not appear that this process has been studied when a distribution other than the uniform distribution is used to select the labels. Note that once the labels, starting vertex, and starting port are fixed, the random basic walk is completely determined because the robot's transition function is completely deterministic. The randomness only comes into play with the choice of labels on the arcs. Thus, probabilistic statements are made with respect to the labeling process, and we often consider certain port orientations which occur with some probability.

In \cite{supersize} the authors consider the analogue of cover time for the random basic walk. Due to the existence of traps, the correct question to ask in this context is how much of the graph one can expect the robot to explore before becoming trapped, rather than how long it will take to explore the entire graph. The authors prove that for the class of graphs $G_{k,n}$ with $k$ fixed and $n\to \infty$, the expected length of the longest tour is $\Theta(\log n)$. They then give experimental evidence that on $G_{n,n}$ the expected maximum length of a cycle is $1.2701 \cdot |V|^{1.8891}$. The existence of such a supersize tour is surprising because the preponderance of small cycles discovered in Section \ref{ch:infinite}.\ref{sec:lattices} suggests that a robot would need to be very lucky to explore this much of the graph. There are many questions still open about finite graphs, e.g. the expected average length of a cycle, the expected maximum length of a cycle, and the expected number of vertices visited before a cycle is hit.

Another natural question to ask about the random basic walk regards recurrence-transience behavior on lattices, since $\Z^2$ is the infinite analogue of the graphs $[n]\times[n]$ above. In light of the ability of the robot to be trapped, P\'{o}lya's original notions of recurrence and transience must be tweaked. A walk which hits a cycle may never return to its origin, but this walk still has the same flavor as a recurrent simple random walk. In this case we say that the random basic walk \emph{cycles}. If a random basic walk does not cycle, then we say it is \emph{transient}, i.e. it visits infinitely many vertices but never visits the same vertex infinitely many times. The pigeonhole principle guarantees that a basic walk which visits a vertex $v$ more than $deg(v)$ times must leave by the same arc more than once, forcing a cycle. Our notion of transience matches P\'{o}lya's, i.e. for each vertex $v$, $\Pr(X_n=v$ for infinitely many $n)=0$. In Chapter \ref{ch:infinite} we will prove an analogue of P\'{o}lya's Theorem for a large class of infinite graphs.

Before progressing, we wish to note that for any graph $G$, the way in which labels are placed on arcs may be changed. Rather than labeling every arc simultaneously at the start, we may define just one new label at each step. Suppose the robot enters vertex $v$ by port $i$ (for the starting vertex an initial port is provided, which we will assume is $1$ in our examples). If $v$ has an outgoing arc labeled by $(i \mod{d(v)})+1$ already then the robot will follow this arc and no labeling will be done. If $v$ does not have such an arc, then there must be a non-empty set of arcs leaving $v$ with no label. Choose one of these arcs uniformly at random and assign it the label $(i \mod{d(v)})+1$. It is easy to see that the walks possible in this definition exactly match the walks possible in the old definition. Furthermore, a given walk will have the same probability in each definition. It will be easier to state and prove results thinking with the new definition. One could also define a third way to assign labels, where the port orientation at a given vertex $v$ is assigned uniformly at random when $v$ is first entered, but we will not need this formulation.

%I can certainly fill in the details of the "easy" claim above, but it's wasted space. COMMENT OUT THE PARAGRAPH BELOW

We sketch an argument that the formulations above are equivalent. If all port numbers at $v$ are assigned simultaneously, then each outgoing arc has probability $1/d(v)$ of receiving a fixed port number $i$. We claim that if the port numbers are assigned one at a time, only as needed by the robot, then the same probabilities are achieved. For the first outgoing arc, the probability of receiving any given port number $i_1$ is obviously $1/d(v)$. For the second arc to be labeled, the probability is $\frac{d(v)-1}{d(v)}*\frac{1}{d(v)-1}=\frac{1}{d(v)}$ because we must first know that the port number chosen is not $i_1$, and then we have $d(v)-1$ choices for which $i_2$ will be chosen of the $d(v)-1$ possibilities remaining. For the $k$-th arc to be labeled, we must first know that $i_1,i_2,\dots,i_{k-1}$ are not chosen and then there are $d(v)-k$ possibilities remaining. So the probability is $\frac{d(v)-k}{d(v)}*\frac{1}{d(v)-k}=\frac{1}{d(v)}$. This proves the two random processes are the same, so we are free to think of the assignment of port numbers as occurring one assignment per step of the robot. This viewpoint will make our theorems and proofs much easier to understand.

%We will study this process on infinite, but locally finite graphs $G$. By graph we will always mean simple, connected, directed graph. If the graph is not connected then a random walker can only explore the connected component in which it starts, so restricting to connected graphs does not lose any information.

%For a formal math paper you'll need to tighten this definition. It'll be an assignment, at each $v$, of.....

% The robot exits $v$ by the given port and comes to the next vertex $v_1$. The robot then increments the port number in its memory and attempts to follow the arc labeled $i+1$ going out of $v_1$. If $v_1$ does not have such an arc, e.g. if $d(v_1)<i+1$, then the robot exits $v_1$ by the arc labeled $j \equiv i+1 \mod{d(v_1)}$, coming to a vertex $v_2$. The robot exits $v_2$ by the arc labeled $j+1 \mod{d(v_2)}$ and continues in this way.

\section{Applications}

Simple random walks have found numerous applications in computer science, and it is natural to wonder if the random basic walk could be applied in similar situations. The most famous application of simple random walks is probably the Markov Chain Monte Carlo algorithm for solving combinatorial optimization problems (\cite{choice11}). Unfortunately, this application has no hope of a random basic walk generalization because the random basic walk is not Markovian. However, there are many other applications of simple random walks including network routing, rumor routing, searching and query processing on distributed networks, load balancing and self-stabilization of such networks as a way to counteract transmission failure, energy savings on large networks, image processing, exploration of unknown terrain, and clustering.

Many of these applications discussed above rely on simple random walks because of their simplicity of implementation, savings on time and memory, and local nature. The random basic walk shares many of these features, and this section discusses several applications where the random basic walk could also be applied. Furthermore, because both the simple random walk and random basic walk rely on randomness, both should give robust applications, i.e. applications which can survive structural changes caused by failures, sleep modes, etc. As many of these applications make use of facts about cover time and load balancing from the theory of simple random walks, the analogous notions for the random basic walk would need to be developed before one could determine the utility of applying random basic walks in a similar way.

Many of these applications above are discussed in \cite{choice} as potential applications of the simple random walk with choice (RWC) which those authors introduce. The RWC makes use of a small amount of local memory to choose its next direction based on which neighbor has been visited the least. The RWC shares some properties with the random basic walk: randomness, use of local memory, local decision rule, and favoritism for visiting new nodes (which the basic walk has before cycling occurs). Thus, one can hope that problems for which RWC leads to good applications will also admit applications of the random basic walk.

Due to the increasing number of monitoring applications which make use of a large network of small, smart sensors, there is great demand for search and distribution algorithms with small overhead. Several such algorithms make use of random walks. One famous algorithm for searching the internet is PageRank (see \cite{choice12}). Another is topic-sensitive PageRank, which computes the stationary probability distribution coming from a simple random walk on websites (see \cite{choice15}). Another application of random walks to sensor networks is \cite{choice2}, which focuses on robust query processing. This paper suggests an application of the random basic walk because only a constant fraction of the sensor network needs to be visited. Similarly, \cite{choice3} considers algorithms for routing on a sensor network. It introduces rumor routing, which is a compromise between flooding queries and flooding event notifications on a network. Rumor routing works by creating paths which lead to each event, so queries move on the network via a simple random walk to find the event path to the correct event.

Peer-to-peer networks are more general than sensor networks or the internet network. Algorithms for searching peer-to-peer networks are proposed in \cite{choice6} and \cite{choice8}, and make use of $k$ simultaneous simple random walks to find the necessary data. In this setting, the query is the random walker, and making use of randomness leads to savings on bandwidth and energy consumption. \cite{choice1} also considers the problem of saving energy during distributed computation. In this paper, the simple random walk is used to control when transmitting nodes are activated. Further work in \cite{choice5} quantifies the effectiveness of simple random walk methods for searching peer-to-peer networks.

Another paper which is concerned with saving energy during distributed computation is \cite{choice7}. Like \cite{choice1}, this paper allows nodes to switch from active to inactive and vice versa at random times, and it studies routing in this context via constrained random walks on dynamic graphs. As with the random basic walk, these algorithms do not require nodes to maintain state information. This paper goes beyond \cite{choice1} in that load balancing is also studied as a way to save on energy. In order to apply the random basic walk to the dynamic system, one would need to develop a notion of constrained random basic walks. The first step in this direction is the equivalent formulation of the random basic walk discussed in Section 1.\ref{sec:basic}.

Another type of network is an ad-hoc network, which relies on wireless links between entities rather than infrastructure such as telephone lines. Due to partial transmission failure, failure of communication links, and noisy transmission, this is a field where algorithms which can stabilize themselves after a failure are highly valued. In \cite{choice4} the authors use simple random walks to create an algorithm for self-stabilizing communication in ad-hoc networks.

When one studies networks, it is often advantageous to arrange the nodes in a certain way so as to make use of the topology in algorithms. \cite{choice9} focuses on the problem of constructing good topologies on distributed networks, with a focus on networks satisfying certain expander properties. It makes use of simple random walks to deal with where to put new nodes to maintain the properties of the topology. This comes down to using the simple random walk as a sampling algorithm to sample potential places the new node could be put.

Simple random walks are also applied beyond the study of networks. For example, \cite{image} and \cite{choice14} apply simple random walks to segmentation in image processing. In particular, both use hitting time computations: the former to determine which labels should be associated to which pixels, the latter to determine how to assign a keyword/classification to an image as a whole. Another example is \cite{choice13}, which uses the connection between simple random walks and electrical networks discussed in \cite{snell} to give a method for exploring a continuous planar domain which contains very few sensors. This paper relates the cover time to the electrical resistance in the domain. Finally, \cite{choice10} applies simple random walks to the problem of graph clustering via defining a cluster to be the set of vertices visited by a simple random walk before some stopping criterion has been met. This paper uses multiple simultaneous simple random walks for the same purpose.

It is the author's hope that the random basic walk will be useful for some applications similar to those discussed above. Already \cite{Sir07} and \cite{Sir09} have shown that (non-random) basic walks can improve over simple random walks. The benefit of random basic walks over non-random basic walks is savings on overhead and on the memory and computations required by individual nodes. It is likely that more theory will need to be developed before random basic walks can find applications like those above--especially theory related to finite graphs.

\section{Outline of Thesis}

Chapter \ref{ch:history} is devoted to studying the existing literature and previous work on this question. In particular, we include a comparison of the random basic walk with other types of random walks and quasirandom processes. In Chapter \ref{sec:examples} we give numerous examples to better understand how the basic walk works. These examples will be cited throughout the rest of the paper. In Chapter \ref{ch:infinite} we prove an analog of P\'{o}lya's theorem for the random basic walk on lattices $\Z^d$, and discuss consequences for the expected maximum number of vertices visited and for the expected number of vertices visited. We then extend this result to arbitrary graphs of bounded degree, and we also provide an example that proves the bounded degree hypothesis is necessary.

An infinite grid is an approximation to a continuous situation, e.g. a robotic vacuum cleaner vacuuming a room. Thus, we feel the results in Chapter \ref{ch:infinite} have the potential to be very useful for applications. However, the random basic walk was first considered on finite graphs, and so in Chapter \ref{sec:complete} we also consider finite graphs. In \cite{supersize} the authors conjectured that $[n]\times[n]$ was an infinite family of graphs on which the random basic walk can be expected to visit a large fraction of the nodes before becoming trapped. This conjecture is still open, but in Chapter \ref{sec:complete} we prove that the complete graphs $K_n$ are such a family of graphs. We end with several new conjectures and directions for future study.

\chapter{Related Random Processes}
\label{ch:history}

In this chapter we seek to develop an intuition and a historical context for the random basic walk by considering related random processes. There are two key properties of the random basic walk: it has the ability to get trapped, and it's defined via local labels at each vertex. Certain generalizations of random walks have had each of these features before, but none seem to have both features at once. The most notable generalization of the simple random walk which allows the walker to become trapped is the self-avoiding random walk. We'll discuss this random process in the Section \ref{ch:history}.\ref{sec:self} and summarize a few of its key properties. Related to the self-avoiding random walk are reinforced random walks, which we'll also discuss. The most notable random process which is based on local orientations and labels on arcs is the rotor router model discussed in Section \ref{ch:history}.\ref{sec:quasirandom}. We'll define this model, list some of its properties, and explain how it is different from the random basic walk.

We will not require knowledge of self-avoiding random walks, reinforced random walks, or rotor routers in future chapters, so readers who are interested only in new results should skip to the next chapter. The only other novel work in this chapter is in Section \ref{ch:history}.\ref{sec:quasirandom}, where we propose a random rotor router which is related to the rotor routers in exactly the same way that the random basic walk is related to the basic walk. We hope that this notion is of interest to those studying rotor routers and we hope that the methods of Chapter \ref{ch:infinite} prove useful to that study.

We will discover in Chapter \ref{ch:infinite} that the random basic walk has different cycling and transience properties than the random processes considered here, i.e. the intuition for infinite graphs is wrong. Nevertheless, we will discuss the recurrence question for all of these random processes to show how it has driven their study and to demonstrate that this is an important question for any random process which generalizes a random walk. Furthermore, the intuition from this chapter on finite graphs might prove useful for solving the open problems of Chapters \ref{sec:complete} and \ref{conclusion}, even though it was not useful for infinite graphs.

\section{Self Avoiding Random Walks}
\label{sec:self}

A \emph{vertex self-avoiding random walk} is a sequence of vertices $(v_0,v_1,\dots)$ where no $v_i$ is repeated, i.e. a path without self intersections. One way to make this rigorous is to define an $n$-step vertex self-avoiding walk as a map $\omega: \{0,1,\dots,n\}\to \Z^d$ such that $|\omega(i+1)-\omega(i)|=1$ and $\omega(i)\neq \omega(j)$ for all $i\neq j$. Vertex self-avoiding random walks can be constructed as simple random walk with a constraint (i.e. conditioned on the event of having no self intersections). Because of this constraint, vertex self-avoiding random walks are not Markovian. As with the basic walk, traps are possible in this model, since a path can terminate in a vertex whose neighbors have all been visited before. A nice exposition of the theory of vertex self-avoiding random walks is given in \cite{self2}.

P\'{o}lya's original proof of Theorem \ref{polya} proceeded by counting the number of walks of length $n$ which were recurrent, dividing this by the total number of walks of length $n$, and taking the limit as $n\to \infty$. There were many reasons for introducing self-avoiding random walks--most notably their connection to chain-like entities in chemistry (\cite{self4})--but one of the most interesting mathematically is that for self-avoiding random walks the problem of counting paths is highly nontrivial. One of the driving questions in the field is how fast the set of non-intersecting paths of length $n$ grows in various graphs. Another interesting reason to study self-avoiding random walks is for their connection to load balancing and cover time. If the walk is forbidden to reuse any vertex, then this is optimal for load balancing. However, this restriction may make it impossible to cover all of the graph, which is far from optimal for cover time.

Because the vertex self-avoiding walk is disallowed from returning to the origin, one cannot frame the recurrence question in exactly the same way as for the simple random walk. If the individual vertex self-avoiding random walks are the object of study, then none are recurrent. One could attempt to define a transience-cycling dichotomy as we did in Section 1.\ref{sec:basic}, i.e. define a vertex self-avoiding walk to be \emph{transient} if the probability of the walk getting arbitrarily far away from the origin is nonzero, but it is unclear what the correct measure should be. The main problem here is that a simple random walk will almost surely have intersections, so one cannot get a useful measure on the set of infinite self-avoiding paths by conditioning infinite simple random walks. Furthermore, the set of trapped paths on $\Z^d$ is countable, but the set of paths which escape to infinity is uncountable when $d>1$. Thus, the question of cycling vs. transience is not the right question for the vertex self-avoiding random walk.

%Formally, let $p_e^{(r)}$ be the probability that the walk achieves distance $r$ from the origin and define $p_e = \lim_{r\to \infty} p_e^{(r)}$. The walk is transient if $p_e>0$. This definition makes it clear that the correct notion of recurrence in the context of vertex self-avoiding walks is $p_e=0$, i.e. the walk gets trapped at some finite distance from the origin.

If one wishes to study vertex self-avoiding walks using simple random walks, then one must know the probability of two simple random walks intersecting. On $\Z^2$, with probability one, any two simple random walks $(x_0,x_1,\dots)$ and $(y_0,y_1,\dots)$ will have infinitely many $n$ with $x_n=y_n$. This is because the difference sequence $(x_0-y_0,x_1-y_1,\dots)$ is a simple random walk and so must hit $0$ infinitely many times by P\'{o}lya's Theorem. It is known that on $\Z^d$ for $d>4$ the probability of two independent simple random walks of length $n$ intersecting is bounded away from 0 as $n\to \infty$ (see \cite{self1}). Thus, vertex self-avoiding random walks are much easier to understand in high dimension than in low dimension because in high dimension there is enough space to avoid traps. In particular, it is shown in \cite{self3} that for $d>4$ the vertex self-avoiding random walk behaves like the simple random walk in the sense that it weakly converges to Brownian motion. The situation is much more complicated for $d=3$ and $d=4$, and is an active area of research.

There is another way in which a random walk can avoid itself. An \emph{edge self-avoiding random walk} is a sequence of vertices $(v_0,v_1,\dots)$ where no edge $(v_i,v_{i+1})$ is repeated, but vertices may be repeated. From the facts above it is easy to see that on $\Z^2$, with probability one, any two simple random walks will share an edge $(x_n,x_{n+1})= (y_n,y_{n+1})$ infinitely many times, simply because there are infinitely many $x_n=y_n$ and a nonzero probability each time that $x_{n+1}=y_{n+1}$. Similarly, on $\Z^d$ for $d>4$ the probability of two independent simple random walks of length $n$ sharing an edge is bounded away from 0 as $n\to \infty$ since sharing an edge requires sharing vertices. Again, the situation for $d=3$ and $d=4$ is much more complicated.

In the random basic walk, the path of the robot is not a simple random walk but behaves like one at vertices which have never been visited before, since all outgoing arcs are equally likely. At vertices which have been visited before, the random basic walk can act in two different ways. If the robot enters by the same port which was used to enter this vertex previously, then the robot will be in a cycle, i.e. the rest of the walk is completely determined. If the robot enters by a new label, then it cannot exit by the same arc it exited by previously, so the robot moves as a particle in an edge self-avoiding random walk. The facts above might lead one to the intuition that for the random basic walk there should be some critical dimension $d$ such that one expects the robot to get stuck in a cycle on $\Z^s$ for $s<d$ but one expects the robot to escape on $\Z^t$ for $t>d$. We will see in Chapter \ref{ch:infinite} that this intuition is incorrect.

We end this section with a note on how the random basic walk can also be understood to be analogous to a vertex self-avoiding random walk. Given a basic walk on a graph $G$ one can define a new walk, which will contain exactly the same information. Consider the directed graph $\hat{G}$ whose vertices are the arcs from $G$, and where we draw an arc from $e_1$ to $e_2$ if the target $v$ of $e_1$ is the source of $e_2$ and if the label $L(e_2) = (L(e_1) \mod{d(v)})+1$. Label this arc in $\hat{G}$ by $L(e_2)$. A basic walk on $G$ immediately defines a walk on $\hat{G}$. If a vertex of $\hat{G}$ is ever visited twice, this means a directed arc in $G$ is traversed twice, i.e. there is a cycle in $G$. Thus, until the random basic walk cycles, the walk on $\hat{G}$ behaves like a vertex self-avoiding random walk.

%Self-avoiding random walks are not the only generalizations of simple random walks which seek to remedy the cover time problem. For instance, weakly self-avoiding walks place a measure on all simple random walks wherein self-avoiding random walks all have equal weight and walks with self intersection receive a weight penalty for each intersection.

\section{Reinforced Random Walks}

Because self-avoiding random walks do not have a good recurrence-transience problem, one might consider a random walk which simply makes reusing a vertex or edge less likely rather than completely disallowed. This leads to the notion of reinforced random walks: a continuous family of probability distributions (depending on a parameter $\beta$) such that self-avoiding walks are obtained in the limit as $\beta\to 0$. Because no edge in a reinforced random walk is ever disallowed from use, there is no notion of trapping for these types of random walks. Unfortunately, reinforced random walks appear to be even harder to understand than self-avoiding random walks. We now discuss reinforced random walks in more detail.

The \emph{edge reinforced random walk}, originally introduced in \cite{edgereinforced}, is defined similarly to the simple random walk, but the probability of moving from a vertex $v_n$ to a neighbor $v_{n+1}$ depends on how many times each edge out of $v_n$ has been crossed. Let $G_n$ be the $\sigma$-field $\sigma(v_0,\dots,v_n)$, i.e. the history of the walk up to vertex $v_n$. Let $N(x,y)$ denote the number of times the edge $\{x,y\}$ has been traversed (in either direction) at the moment when the walker is at $v_n$. Then

$$P(v_{n+1} = w\;|\; G_n) = \frac{1+N(w,v_n)}{\sum_{y\sim v_n}(1+N(y,v_n))}$$

Unpacking this definition, we see that the edge reinforced random walker prefers edges it has walked on before. Similarly, the \emph{vertex reinforced random walk} (first introduced in \cite{vertexreinforced}) makes use of conditional probabilities, but weights $P(v_{n+1}=w\;|\; G_n)$ by the number of times the vertices neighboring $v_n$ have been visited. This number can be different from the number of times $\{v_n,z\}$ has been crossed, since the vertex $z$ could have been visited via edges which did not involve $v_n$. A comprehensive treatment of these two processes is given in \cite{reinforced}, and that is where the definition above comes from.

The recurrence question for vertex reinforced random walks is partially resolved in \cite{volkov}, where it is shown that for almost all graphs the walk will visit only finitely many vertices with positive probability. Furthermore, for all trees of bounded degree the walk will visit only finitely many vertices with probability 1. This result is conjectured for all graphs of bounded degree, but is not proven.

The recurrence question for edge reinforced random walks is settled for trees in Theorem 5.2 of \cite{reinforced}. However, it is still open for most graphs and in particular for all lattices $\Z^d$. This unfortunate gap in knowledge makes it impossible to apply the theory of edge reinforced random walks to random basic walks.

A simpler version of the edge reinforced random walk is the \emph{once-reinforced random walk}, where we only retain the information of whether an edge has been crossed or not, rather than the information of how many times it has been crossed. One change to the model from \cite{edgereinforced} is to allow a parameter $\beta>0$ governing how much the walker cares about traversing an edge it has already traversed. In this model, we allow our graph edges to have weights, and we begin with all edges weighted 1. Every time an edge with weight 1 is crossed the weight is changed to $\beta$. The probability that the walker will cross an edge $e$ out of a vertex $v$ is

$$\frac{weight(e)}{\sum_{w\sim v} weight(\{w,v\})}$$

Note that each of these weights depends on the walk leading up to $v$, just as in the model from \cite{edgereinforced}. If $\beta<1$ then the walker prefers to move along edges it has never traveled before. If $\beta>1$ then the walker prefers to retrace its steps. It is clear that in the limit as $\beta \to 0$ we recover the self-avoiding random walk. The question of recurrence vs. transience on once-reinforced random walks is addressed in \cite{selkie} and the walk is proven to be recurrent on $\Z$ and on $\{0,1\} \times \Z$ for all $\beta$. The question is still open on $\{0,1,2\} \times \Z$ and on $\Z^d$ for $d>1$.

There are numerous other generalizations of random walks in this vein, including reinforced random walks of sequence type, reinforced random walk of matrix type, vertex reinforced jump processes, weakly reinforced random walks, weakly self-avoiding walks, loop erased random walks, multi-agent random walks, random walks with restarts, and directionally reinforced random walks, and recurrence questions have been studied on all of them.

\section{Quasirandom processes}
\label{sec:quasirandom}

The second main trait of the random basic walk, after its ability to get trapped, is the fact that there are pointers on all vertices and a local deterministic rule by which the robot decides which direction to move next. This is reminiscent of the rotor router model popularized in recent years by Jim Propp, and we now discuss this model. Because the rotor router model is a model of quasirandom processes, we must first define some very general terms.

A \emph{random process} is a sequence of random variables describing a processes whose outcomes are controlled by probability distributions rather than a deterministic pattern, e.g. simple random walks are random processes describing the motion of the walker. A \emph{pseudorandom process }is a deterministically generated process whose behavior is designed to exhibit statistical randomness, e.g. random number generators. A \emph{quasirandom analogue of a random process} $X$ is a deterministic process designed to give the same limiting behavior as $X$ but with faster convergence. Quasirandom processes usually fail to exhibit statistical randomness but do capture information like recurrence. The quasirandom analogue of simple random walks on directed graphs is the \emph{rotor router model}.

The rotor router model consists of the following data. At every vertex $v$ fix a cyclic ordering on the outgoing arcs $e_1,e_2,\dots,e_d$ where $d$ is the out-degree of $v$. From this ordering, create an infinite periodic sequence of period $d$ via $e_1,e_2,\dots,e_d,e_1,e_2,\dots)$. This sequence is called the \emph{rotor pattern} at $v$. Next, a \emph{rotor} is placed at each $v$--initially pointing in the direction $e_1$--and a particle is started at some initial vertex $v_0$. The rotor router model can be understood as an automaton, with states $(v,\ell)$ where $v\in G$ is the position of the particle and $\ell$ is the direction of the rotor at $v$. The transition rule is as follows: when $v$ is first visited, the rotor rotates to $e_2$ and then the particle exits by $e_1$. When $v$ is next visited, the rotor rotates to $e_3$ and then the particle exits by $e_2$, etc. Thus, the direction of the rotor determines which vertex a particle visiting $v$ visits next. The collection of rotor patterns over all vertices is called the \emph{rotor configuration}. The path of the particle is called a \emph{rotor walk}.

Rotor routers were first introduced in \cite{rotor1} as an offshoot of the abelian sandpile model of \cite{rotor2} which studies self-organized criticality. They have been heavily studied by Jim Propp (e.g. \cite{rotor5, rotor3, rotor4}) and are claimed to ``better than random'' because the central limit theorem behavior is achieved immediately, rather than requiring a large number of time steps. For instance, if one iterates a rotor router on $\Z$ starting at any vertex $0 < v < n$ and stopping when either $0$ or $n$ is reached, then exactly half the particles will reach $0$ before $n$. In particular, subsequent runs will alternate which of these two options occurs, and this holds for any rotor configuration. The Central Limit Theorem predicts this behavior, but would require many more time steps to conclude that the proportion of particles reaching 0 before $n$ is $1/2$. Similarly, for any target vertex $t$, iterated rotor walks will hit $t$ with the same frequencies as one would expect from a simple random walk.

Rotor routers have found numerous applications, including load balancing for parallel processing, protocol broadcasts on networks, mergesort, and internal diffusion limited aggregation (\cite{rotor8}). A nice reference for facts about rotor routers is \cite{rotor6}. Unlike the random basic walk or the simple random walk, a rotor walk is completely deterministic. It does not appear that any papers have studied what happens when the rotor patterns are chosen according to a random process as is done in the random basic walk. We now pause to formalize this idea.

Let $G$ be a directed graph. For every vertex $v$, let $C_v$ denote the set of cyclic orderings on the set of arcs going out of $v$. Let $\mathcal{X}_v$ be a probability distribution on this set and choose a cyclic ordering based on $\mathcal{X}_v$. The rotor pattern selected is the \emph{random rotor pattern }at $v$ with respect to $\mathcal{X}_v$. The collection of random rotor patterns over all $v\in G$ is the \emph{random rotor configuration} on $G$ with respect to $\{\mathcal{X}_v \;|\; v\in G\}$. Placing a rotor on each vertex, selecting an initial vertex $v_0$, and starting a particle from $v_0$ with the transition rule of the rotor rotor model defines the \emph{random rotor router model} as an automaton with states $(v,\ell)$ just like the rotor router model. The path of the particle is called a \emph{random rotor path}.

%For each $e\in O(v)$ there is a random variable $L(e)$ which may take values in $\{1,2,\dots,|O_v|\}$. Let $\mathcal{X}_v$ be a probability distribution on the set $O_v$ which assigns probabilities to the events $L(e_i)=j$. Assign $L(e_1)$ according to $\mathcal{X}_v$. Next, assign $L(e_2)$ according to $\mathcal{X}_v$ with the constraint that $L(e_2)\neq L(e_1)$. Next, assign $L(e_3)$ such that $L(e_3)\neq L(e_1),L(e_2)$ and continue in this way until all arcs have been assigned a label. Let $e^i$ denote the edge with label $i$. The sequence of edges $e^1,e^2,\dots,e^{|O(v)|}, e^1, e^2,\dots$ defines a rotor pattern at $v$. In general, the rotor pattern defined above is a random variable which depends on $\mathcal{X}_v$, and we shall call it the \emph{random rotor pattern }at $v$ with respect to $\mathcal{X}_v$.

The random rotor router model is most analogous to the random basic walk when all the probability distributions above are uniform distributions. Because this notion has not appeared in the literature, very little is known about the random rotor router model, and there are numerous interesting questions. Some results are known (the theorem below, for instance) because some of the results in the literature for rotor routers do not depend on the rotor configuration. However, many of the papers on rotor routers require the rotor configuration to satisfy certain properties, and it would be interesting to know whether the results of these papers are also true for random rotor routers.

There are several other differences between rotor walks (even random rotor walks) and the random basic walk. For instance, no matter which direction a vertex $v$ is entered from, the rotor will determine the exit direction. This leads to rotor walks bouncing around between a small number of vertices until rotors have rotated sufficiently to allow the walker to escape the cluster. Furthermore,  the rotor walk can never become trapped. This is not immediately clear, but is a consequence of the following theorem of Cooper and Spencer (\cite{rotor7}):

\begin{thm}[Cooper-Spencer Theorem] Suppose one begins with a set of particles on vertices of $\Z^d$ with even distance from the origin, and that all particles undergo simultaneous rotor walks for $n$ steps. Let $RR(v)$ denote the number of particles at $v$ at the conclusion of this process. Let $RW(v)$ denote the expected number of particles at $v$ if the particles moved via simple random walks rather than rotor walks. Then there is a constant $C_d$ depending on $d$ but not on the rotor configurations, the time $n$, or the initial placement of particles such that for all $v$

$$|RR(v)-RW(v)| < C_d$$

Furthermore, the constant $C_d$ is known for $d=1$ and $d=2$
\end{thm}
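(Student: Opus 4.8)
The plan is to follow the strategy of Cooper and Spencer: isolate a purely local statement about how a single rotor apportions chips among its neighbours, then show that the (a~priori divergent-looking) accumulation of these local errors telescopes against the heat kernel of the simple random walk, once two structural features of the rotor mechanism are exploited.

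First I would establish the \emph{single-vertex lemma}. Fix a vertex $u$ of $\Z^d$; over the whole process let $A_m(u)$ be the number of chips that have fired from $u$ by time $m$, and $\sigma_m(u,w)$ the number of those chips sent along the arc $u\to w$. Since the rotor at $u$ advances exactly one notch per fired chip and cycles periodically through its $2d$ outgoing arcs, the count $\sigma_m(u,w)$ is the number of elements of a fixed residue class among $A_m(u)$ consecutive integers, hence equals $\lfloor A_m(u)/2d\rfloor$ or $\lceil A_m(u)/2d\rceil$; thus $|\sigma_m(u,w)-\tfrac1{2d}A_m(u)|<1$ uniformly in $m$ and independently of the initial rotor position, and $\sum_{w\sim u}\big(\sigma_m(u,w)-\tfrac1{2d}A_m(u)\big)=0$ because every fired chip lands somewhere. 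Writing $\delta_t(u,w)$ for the one-step defect $\big(\sigma_t-\sigma_{t-1}\big)-\tfrac1{2d}\big(A_t-A_{t-1}\big)$ and $S_m(u,w)=\sum_{t\le m}\delta_t(u,w)=\sigma_m(u,w)-\tfrac1{2d}A_m(u)$, the lemma says $|S_m(u,w)|<1$ for all $m$ and $\sum_{w\sim u}S_m(u,w)=0$.

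Next I would run the comparison. Let $RR_m(v)$ and $RW_m(v)$ be the rotor chip count and the random-walk expectation at $v$ after $m$ simultaneous steps, and $E_m(v)=RR_m(v)-RW_m(v)$, so $E_0\equiv 0$. Since at step $m+1$ the chips firing from $w$ are exactly those present at $w$ after step $m$, one has $A_{m+1}(w)-A_m(w)=RR_m(w)$, and subtracting the two divergence-of-flow recursions gives $E_{m+1}(v)=\tfrac1{2d}\sum_{w\sim v}E_m(w)+\sum_{w\sim v}\delta_{m+1}(w,v)$ (the abelian nature of chip-firing is what makes this bookkeeping order-independent). Unwinding from time $0$ yields $E_n(v)=\sum_{s=1}^{n}\sum_{u}\sum_{w\sim u}\delta_s(u,w)\,p_{n-s}(w,v)$, where $p_k(\cdot,\cdot)$ is the $k$-step simple random walk transition probability on $\Z^d$. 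Bounding $|\delta_s|\le 1$ term by term is hopeless, since in the recurrent cases $d\le 2$ the weights $\sum_s p_{n-s}(w,v)$ already diverge; so I would make two reductions. Summing by parts in $s$ along each arc $(u,w)$ replaces $\delta_s$ by the bounded partial sum $S_s(u,w)$ at the cost of a time difference $p_{k}-p_{k-1}$ of the kernel; then, for each fixed $u$, the identity $\sum_{w\sim u}S_s(u,w)=0$ lets me subtract the value based at $u$ and replace that time difference by a further spatial difference across the edge $u$--$w$. Up to harmless relabelling this gives $|E_n(v)|\le 2d\,\Big(1+\sum_{u}\sum_{k\ge 1}\big|\nabla_x\partial_k\,p_k(\cdot,v)\big|_{\text{near }u}\Big)$, a sum of mixed space-time second differences of the heat kernel.

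The main obstacle is the final estimate: bounding this double sum by a constant depending only on $d$, uniformly in $n$ and $v$. This is where the dimension enters, and where both cancellations are indispensable --- in dimensions $1$ and $2$ neither $\sum_k p_k(\cdot,v)$ nor $\sum_w p_k(w,v)$ is bounded, so only the mixed difference is summable. I would invoke the local central limit theorem with its gradient refinements --- $|p_k(x,y)|=O_d(k^{-d/2})$, $|\nabla_x p_k(x,y)|=O_d(k^{-(d+1)/2})$, $|\partial_k p_k(x,y)|=O_d(k^{-(d+2)/2})$, each carrying a Gaussian factor $e^{-c|x-y|^2/k}$ and respecting the bipartite parity constraint --- so that the sum over $k$ contributes $O_d(|u-v|^{-(d+1)})$ from each vertex $u$, and the sum over the $O(R^{d-1})$ vertices at distance $R$ from $v$ is $O_d\!\big(\sum_R R^{-2}\big)=O_d(1)$. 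The resulting constant depends on $d$ only through the implied constants above; for $d=1,2$ one can track these explicitly (via exact formulas or Fourier bounds for $p_k$ on $\Z$ and $\Z^2$) to pin down $C_d$, which is the content of the final clause. Everything else --- the single-vertex lemma, the abelian reordering, and the unwinding of the recursion --- is routine once this heat-kernel variation bound is secured.
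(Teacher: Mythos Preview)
The paper does not prove this theorem at all; it is quoted from Cooper and Spencer's paper \cite{rotor7} as background in the discussion of rotor routers, and then its consequences for recurrence are discussed. So there is no ``paper's own proof'' to compare against.

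That said, your proposal is a faithful outline of the actual Cooper--Spencer argument: the single-vertex rotor lemma giving $|S_m(u,w)|<1$ with $\sum_{w\sim u}S_m(u,w)=0$, the Duhamel-type expansion of $E_n$ against the heat kernel, the summation by parts in time together with the zero-sum identity to manufacture a mixed space--time second difference of $p_k$, and the local CLT gradient bounds to show that the resulting sum is $O_d(1)$. The identification of the main obstacle --- that naive bounding diverges for $d\le 2$ and that both cancellations are essential --- is exactly right, and the decay-rate bookkeeping at the end is the correct heuristic. If you wanted to turn this into a self-contained proof you would need to be careful about the parity/bipartiteness of $\Z^d$ (which is why the hypothesis places particles at even distance from the origin), and to make the Abel summation step precise, but the skeleton is sound.
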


This theorem also answers the recurrence question for rotor routers and random rotor routers. In particular, it says that if we start a particle at the origin and allow a rotor walk to run for infinitely many steps then in $\Z^2$ we expect all vertices to be visited but in $\Z^d$ for $d>2$ it is possible some vertices will not be visited. Thus, the (random) rotor walk behaves in a recurrent manner on $\Z^2$ and in a transient manner on $\Z^d$ for $d>2$, though the notions of recurrence and transience of rotor routers do not appear to be in the literature. This result is another difference between the (random) rotor walk and the random basic walk, as can be seen from Theorem \ref{lattices}.

\chapter{Examples}
\label{sec:examples}

In this chapter we will consider several examples of the basic walk, both to gain a better understanding of how it works and because these examples will be used again in the proofs of Chapter \ref{ch:infinite}. The starting vertex will always be denoted by $v$, and the port orientations are given in the figures. If there is a label $\ell$ on an arc into $v$ then the reader should start the basic walk from $v$ with the label $\ell+1$ and follow arcs of increasing port number step by step until the pattern is understood. If there is no label, then the reader should start with the arc leaving $v$ with label $1$, and should try different initial labels until they see the pattern.

\section{Trapping Configurations}
\begin{example} The following port orientations near vertex $v \in \Z^2$ force the robot to be trapped:
\label{latticetraps}

\xymatrix{v \ar[r]^1 & \bullet \ar[d]^2 & & & & & & \bullet \ar[r]_{i+3} & v \ar@/_1pc/[l]_{i+1} \ar[r]_{i+3}& \bullet \ar@/_1pc/[l]_i\\\bullet \ar[u]^4 & \bullet \ar[l]^3 & & & & & & & \bullet \ar[u]^i & }

The second can be rotated to yield four configurations, where $v$ is entered either from the north, south, east, or west. We will refer to any of these four port orientations (with any entering port $i$) as a \emph{trapping configuration $T$}. In a random basic walk, the configuration $T$ occurs with probability $c=(1/4)^3\cdot(1/3)$ since the probability of the edge labeled $i+1$ receiving that label is $1/4$, the probability of $i+2$ is $1/4$, the probability of $i+3$ is $1/3$, and the probability of $i$ is $1/4$. This configuration $T$ and the fact that it occurs with probability $c>0$ will be used in the proof of Theorem \ref{2d}.
\end{example}

It is easy to generalize $T$ to hold in $\Z^d$, using the fact that $\Z^d$ is $2d$-regular so a node only needs $d$ neighbors to form a trap. Rather than 3 collinear vertices, the trap will be made up of a central vertex $v$ and $d$ neighbors all living on some hyperplane of dimension $d-1$ in $\Z^d$. Call such configurations $T_d$. They exist because the hyperplane is $2(d-1)$-regular and $2d-2 \geq d$ for all $d\geq 2$. These examples will be used in Chapter \ref{ch:infinite} to understand how random basic walks on $\Z^d$ behave.

Graphs other than $\Z^d$ can also have traps. For instance, let $G$ be the hexagonal lattice. Often when considering questions on infinite lattices it is easier to work with the hexagonal lattice, but the random basic walk is one interesting place where it is harder. The reason is that the hexagonal lattice is 3-regular but triangle free. This means that a cycle must use at least 6 arcs, so a trap at a vertex $v$ cannot consist entirely of neighbors of $v$ which have not been visited by the robot yet.

\xymatrix{&&\vdots&&\vdots&&\vdots&&\\
&&\bullet \ar@{-}[dr] \ar@{-}[dl] & & \bullet \ar@{-}[dr] \ar@{-}[dl]&&\bullet \ar@{-}[dr] \ar@{-}[dl]&&\\
&\bullet \ar@{-}[d]&&\bullet \ar@{-}[d]&&\bullet \ar@{-}[d]&&\bullet \ar@{-}[d]&\\
&\bullet \ar@{-}[dr] && \bullet \ar@{-}[dr] \ar@{-}[dl]&&\bullet \ar@{-}[dr] \ar@{-}[dl]&&\bullet \ar@{-}[dl]&\\
\dots&&\bullet \ar@{-}[d] && \bullet \ar@{-}[d]&&\bullet \ar@{-}[d]&&\dots\\
&&\bullet \ar@{-}[dl] \ar@{-}[dr]&& \bullet \ar@{-}[dl]\ar@{-}[dr]&&\bullet \ar@{-}[dr]\ar@{-}[dl]&&\\
&\bullet \ar@{-}[d]&&\bullet \ar@{-}[d]&&\bullet \ar@{-}[d]&&\bullet \ar@{-}[d]&\\
&&\vdots&&\vdots&&\vdots&&}

\begin{example} \label{graphtraps} The following configurations can trap the robot on the hexagonal lattice above.

\xymatrix{(1) &w \ar[dr]_1 & & a \ar@/_1pc/[dl]_3 &(2)& w\ar[dr]^1 & & \bullet \ar@{-}[dr]^3 \ar@/_1pc/[dl]_1& & \bullet \ar@/_1pc/[dl]_2\\
&& v \ar[ur]_2 \ar[d]_1 \ar@/_1pc/[ul]_3 & && & v \ar[ur]_2 & & \bullet \ar[ur]_1 & \\
&& b \ar@/_1pc/[u]_2 &&&&&&&}
\end{example}

The left-hand trap is a star-shaped trap which consists of the central vertex $v$, two neighbors $a$ and $b$ which have not been visited yet by the robot, and the neighbor $w$ from which the robot entered $v$. This trap requires the use of a vertex the robot has visited before, however only the arc from $v$ back to $w$ is needed for the trap to exist, so the existence of the trap is independent from the walk leading up to $v$, as long as $v,a,$ and $b$ have not been visited yet. We refer to this trap as a \emph{star} $V$.

The right-hand trap consists of three vertices which have not been visited before and a path between them which will trap the robot forever between these three vertices and $v$. We refer to this trap as a \emph{spire} $S$. In a random basic walk, both $V$ and $S$ occur with constant probability.

\section{Non-regular graphs}

On a regular graph $G$ of degree $d$, one knows that the basic walk will take a step labeled $t\mod{d}$ in time step $t$, i.e. there is a global clock controlling which label to use. This is not true on non-regular graphs:

\begin{example} \label{stargraph} Consider the star graph, where the port numbers shown are for $v$, and all port numbers coming into $v$ are $1$, since all nodes other than $v$ have degree 1.
\xymatrix{\bullet & & \bullet& & w \\
\bullet& & v \ar@{-}[urr]^(.6)2 \ar@{-}[u]_1 \ar@{-}[rr]_3 \ar@{-}[drr]_4 \ar@{-}[d]^5 \ar@{-}[dll]^6 \ar@{-}[ll]^7 \ar@{-}[ull]^8 & & \bullet \\
\bullet & & \bullet& & \bullet}
\end{example}

If the basic walk was controlled by a global clock then this would suggest the robot follows the 1 arc out of $v$, then returns, then follows the 3 arc, then returns, then follows the 5 arc, etc. However, what actually occurs is that no matter which arc the robot follows out of $v$ it returns by an arc labeled 1. Thus, it must leave $v$ by the arc labeled 2. It is now trapped between that vertex $w$ and $v$ forever.

\section{Interesting labelings on $\Z^2$}
\label{integerlabelings}

A nice property of simple random walks is that they are guaranteed to visit every vertex in $\Z^2$ with probability 1 if left to run for enough steps. It will be shown in Chapter \ref{ch:infinite} that this property fails for the random basic walk. However, given the initial vertex and port number in $\Z^2$ it is possible to produce infinitely many labelings such that the basic walk will explore the whole graph.
%This yields an infinite class of graphs by applying an east-west stretch factor of $k$, where $k\in \Z_{>0}$.

\begin{example} \label{spiral} Fix an initial vertex $v_0$ and an initial direction $i$. Then there is an infinite family of labelings on $\Z^2$ starting from $v_0$ and direction $i$ which explore the whole plane. This is obtained by spirals with east-west stretch factor $k\in \Z_{>0}$. The cases shown are $k=1$ and $k=2$:

\xymatrix{
\bullet \ar[d] & \bullet \ar[l] & \bullet \ar[l] & \bullet \ar[l] & \bullet \ar[l] &  \cdots & \bullet \ar[l] & \bullet \ar[l] & \bullet \ar[l] & \bullet \ar[l]\\
\bullet \ar[d] & \bullet \ar[d] & \bullet \ar[l] & \bullet \ar[l] & \bullet \ar[u] & \bullet \ar[d] & \bullet \ar[l] & \bullet \ar[l] & \bullet \ar[l] & \bullet \ar[u]\\
\bullet \ar[d] & \bullet \ar[d] & v \ar[r]^i & \bullet \ar[u]^{i+1} & \bullet \ar[u] &  \bullet \ar[d] & v \ar[r]^i & \bullet \ar[r]^{i+1} & \bullet \ar[u] & \bullet \ar[u]\\
\vdots & \bullet \ar[r] & \bullet \ar[r] & \bullet \ar[r] & \bullet \ar[u] &  \bullet \ar[r] & \bullet \ar[r] & \bullet \ar[r] & \bullet \ar[r] & \bullet \ar[u]\\
}
\end{example}

One should avoid the assumption that the starting vertex and port are fixed before the labeling is chosen, since in practice a robot could be placed anywhere by its owner and expected to explore its surroundings. Without the assumption of a fixed starting place and initial port, the author does not know if there is a labeling such that the basic walk visits every vertex. However, there are labelings such that the basic walk from any vertex and with any initial port is transient. In this figure, each edge represents arcs going in both directions, and the label on the edge is the label on both of the two arcs.

\begin{example} \label{transient} A labeling in $\Z^2$ where any basic walk escapes to infinity:

\xymatrix{
     & & & \vdots & & & \\
     & \bullet \ar@{-}[l]^2 \ar@{-}[u]^1 \ar@{-}[d]^3 \ar@{-}[r]^4 & \bullet \ar@{-}[u]^1 \ar@{-}[d]^3 \ar@{-}[r]^2 & \bullet \ar@{-}[u]^1 \ar@{-}[d]^3 \ar@{-}[r]^4 & \bullet\ar@{-}[u]^1 \ar@{-}[d]^3 \ar@{-}[r]^2 &\bullet \ar@{-}[d]^3 \ar@{-}[u]^1 \ar@{-}[r]^4 &\\
     &\bullet\ar@{-}[l]^2 \ar@{-}[d]^1 \ar@{-}[r]^4 &\bullet\ar@{-}[d]^1 \ar@{-}[r]^2 &\bullet\ar@{-}[d]^1 \ar@{-}[r]^4 &\bullet\ar@{-}[d]^1 \ar@{-}[r]^2 &\bullet \ar@{-}[d]^1 \ar@{-}[r]^4 &\\
\dots &\bullet \ar@{-}[l]^2 \ar@{-}[d]^3 \ar@{-}[r]^4 &\bullet \ar@{-}[d]^3 \ar@{-}[r]^2 &\bullet \ar@{-}[r]^4 \ar@{-}[d]^3 &\bullet \ar@{-}[r]^2 \ar@{-}[d]^3 &\bullet \ar@{-}[r]^4 \ar@{-}[d]^3 & \dots \\
     &\bullet\ar@{-}[l]^2 \ar@{-}[d]^1 \ar@{-}[r]^4 &\bullet\ar@{-}[d]^1 \ar@{-}[r]^2 &\bullet \ar@{-}[d]^1 \ar@{-}[r]^4 &\bullet \ar@{-}[d]^1 \ar@{-}[r]^2 &\bullet \ar@{-}[d]^1 \ar@{-}[r]^4 & \\
     &\bullet \ar@{-}[l]^2 \ar@{-}[r]^4 \ar@{-}[d]^3&\bullet\ar@{-}[r]^2 \ar@{-}[d]^3 &\bullet \ar@{-}[r]^4 \ar@{-}[d]^3 &\bullet \ar@{-}[r]^2 \ar@{-}[d]^3 &\bullet \ar@{-}[r]^4 \ar@{-}[d]^3 & \\
 & & & \vdots & & &
}
\end{example}

Every starting vertex $v$ and port $i$ leads to an infinite staircase moving in the directions specified by $i$ and $i+1$ which gets further away from $v$ with every step. This example generalizes to $\Z^d$ with a staircase consisting of a sequence of moves, one in each of the $d$ directions. This example yields an infinite family of examples where the basic walk escapes to infinity from any starting vertex and any initial label: simply add in blocks of 4 columns which act like plateaus for the robot to move east or west for $4n$ steps between a given north-south step on the staircase. The case above is $n=0$; the case below is $n=1$, with dotted lines to distinguish the block of new columns. For $n>1$ one must simply insert $n$ copies of this block:

\begin{example} \label{transient2} An infinite family of labelings where all basic walks escape to infinity:

\xymatrix{
    & & & & \vdots & & & & & \\
    &\bullet \ar@{-}[l]^4 \ar@{-}[d]^3 \ar@{-}[u]_1& \bullet \ar@{-}[l]^2 \ar@{..}[u]^1 \ar@{..}[d]^3 \ar@{-}[r]^4 & \bullet \ar@{-}[u]^3 \ar@{-}[d]_2 \ar[r]^1 & \ar@/_1pc/[l]_3 \bullet \ar@{-}[u]^1 \ar@{-}[d]^4 \ar@{-}[r]^2 & \bullet\ar@{-}[u]^4 \ar@{-}[d]_1 \ar[r]^3 &\ar@/_1pc/[l]_1\bullet \ar@{-}[d]^2 \ar@{-}[u]_3 \ar@{-}[r]^4 &\bullet \ar@{..}[d]_3 \ar@{..}[u]^1 \ar@{-}[r]^2 & \bullet \ar@{-}[r]_4 \ar@{-}[d]_3 \ar@{-}[u]^1&\\
    & \bullet \ar@{-}[l]^4 \ar@{-}[d]^1 &\bullet\ar@{-}[l]^2 \ar@{..}[d]^1 \ar@{-}[r]^4 &\bullet\ar@{-}[d]_3 \ar[r]^1 &\ar@/_1pc/[l]_3\bullet\ar@{-}[d]^1 \ar@{-}[r]^2 &\bullet\ar@{-}[d]_4 \ar[r]^3 &\ar@/_1pc/[l]_1\bullet \ar@{-}[d]^3 \ar@{-}[r]^4 &\bullet \ar@{..}[d]_1 \ar@{-}[r]^2 &\bullet \ar@{-}[r]_4 \ar@{-}[d]_1 &\\
\dots & \bullet \ar@{-}[l]^4 \ar@{-}[d]^3&\bullet \ar@{-}[l]^2 \ar@{..}[d]^3 \ar@{-}[r]^4 &\bullet \ar@{-}[d]_2 \ar[r]^1 &\ar@/_1pc/[l]_3\bullet \ar@{-}[r]^2 \ar@{-}[d]^4 &\bullet \ar[r]^3 \ar@{-}[d]_1 &\ar@/_1pc/[l]_1\bullet \ar@{-}[r]^4 \ar@{-}[d]^2 &\bullet \ar@{..}[d]_3 \ar@{-}[r]^2 & \bullet \ar@{-}[r]_4 \ar@{-}[d]_3 & \dots \\
    & \bullet \ar@{-}[l]^4 \ar@{-}[d]^1 &\bullet\ar@{-}[l]^2 \ar@{..}[d]^1 \ar@{-}[r]^4 &\bullet\ar@{-}[d]_3 \ar[r]^1 &\ar@/_1pc/[l]_3\bullet \ar@{-}[d]^1 \ar@{-}[r]^2 &\bullet \ar@{-}[d]_4 \ar[r]^3 &\ar@/_1pc/[l]_1\bullet \ar@{-}[d]^3 \ar@{-}[r]^4 &\bullet \ar@{..}[d]_1 \ar@{-}[r]^2  & \bullet \ar@{-}[r]_4 \ar@{-}[d]_1 &\\
    &\bullet \ar@{-}[l]^4 \ar@{-}[d]^3 &\bullet \ar@{-}[l]^2 \ar@{-}[r]^4 \ar@{..}[d]^3&\bullet\ar[r]^1 \ar@{-}[d]_2 &\ar@/_1pc/[l]_3\bullet \ar@{-}[r]^2 \ar@{-}[d]^4 &\bullet \ar[r]^3 \ar@{-}[d]_1 &\ar@/_1pc/[l]_1\bullet \ar@{-}[r]^4 \ar@{-}[d]^2 &\bullet \ar@{..}[d]_3 \ar@{-}[r]^2  & \bullet \ar@{-}[r]_4 \ar@{-}[d]_3 & \\
 & & & & \vdots & & & & &
}
\end{example}

A basic walk which begins outside the block will either move away from it by a staircase or will move towards and eventually connect with the new columns. At that point, the robot will enter by a port 4 and will cross without moving north or south. On the other end, the staircase will continue and the robot will escape. A basic walk which begins inside will always exit and then move away via a staircase. If the initial port used takes the robot east or west, then the robot will cross through the block and move away by a staircase. If the initial port used takes the robot north or south, then it is clear that the robot can move by at most two rows before turning and exiting on an east-west path. This example demonstrates that Theorem \ref{lattices} only holds for almost all labelings.

\chapter{Random Basic Walks on Infinite Graphs}
\label{ch:infinite}

This chapter addresses the question of transience vs. cycling on infinite graphs $G$, where the assumptions of connectedness, countability, and local finiteness are always implicit. The question is resolved for lattices $\Z^d$ in Section \ref{sec:lattices}, for regular graphs in Section \ref{sec:regular}, and for graphs of bounded degree in Section \ref{sec:locallyfinite}, where also an example is given to show the hypothesis of bounded degree cannot be dropped. In Section \ref{sec:regular}, a proposition is proven which shows that there are arbitrarily long paths and cycles on any regular graph, but the random basic walk is unlikely to hit them.
\section{Transience and Cycling on Integral Lattices}
\label{sec:lattices}

Recall the configurations $T$ from Example \ref{latticetraps}. Such $T$ could occur any time the random basic walk on $\Z^2$ visits the center vertex of three collinear vertices where none has been visited before. If any of the vertices have been visited before, then $T$ may be impossible, since it could be the case that the one of the labels $T$ requires has already been assigned to a different arc by a previous step of the robot. There are other ways for the robot to be trapped, but we need only consider a specific instance of $T$ for Theorem \ref{2d}. A transient random basic walk must avoid $T$ infinitely many times, but this is impossible, since $T$ occurs with constant, nonzero probability.

\begin{thm} \label{2d} In $\mathbb{Z}^2$, the random basic walk cycles with probability $1$.
\end{thm}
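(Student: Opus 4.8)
The plan is a conditional Borel--Cantelli argument. Since, by the pigeonhole observation of Section~1.\ref{sec:basic}, every random basic walk either cycles or is transient, it suffices to show that the probability of being transient is $0$. I will do this by showing that a transient walk must, almost surely, be handed infinitely many chances to fall into the trapping configuration $T$ of Example~\ref{latticetraps}, each chance being realized with conditional probability exactly the constant $c = (1/4)^3(1/3) > 0$ computed there; a walk handed infinitely many such chances cannot decline them all.

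First I would fix what a ``chance'' is. Working with the one-assignment-per-step description of the walk (Section~1.\ref{sec:basic}), say there is a \emph{western trapping opportunity} at time $t$ if $x_t$ is first visited at step $t$, the walk entered $x_t$ from its western neighbor, and the northern and southern neighbors of $x_t$ are both unvisited at time $t$. At such a moment the outgoing arcs of $x_t$ and of its north and south neighbors are all still unlabeled, so the computation in Example~\ref{latticetraps} shows the appropriate rotation of $T$ gets assembled over the next four steps of the walk with conditional probability exactly $c$, regardless of the history; and once $T$ is in place the walk is confined to a $4$-cycle through $x_t$ and its two vertical neighbors, so it cycles. Define eastern, northern, and southern opportunities by the three rotations, each again completing to a trap with conditional probability $c$.

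The crux is that a transient walk almost surely meets infinitely many opportunities. A transient walk visits infinitely many distinct vertices, hence has unbounded trajectory, so on $\{\text{transient}\}$ at least one of $\sup_t (x_t)_1,\ \inf_t (x_t)_1,\ \sup_t (x_t)_2,\ \inf_t (x_t)_2$ is infinite; by the symmetry of $\Z^2$ it is enough to handle $\{\sup_t (x_t)_1 = \infty\}$. On that event, let $\tau_m$ be the first time the first coordinate of the walk equals $m$; these are finite for all large $m$ and strictly increasing. Since one step changes the first coordinate by $0$ or $\pm 1$, the walk enters $x_{\tau_m}$ from its western neighbor; and every vertex visited strictly before $\tau_m$ has first coordinate $\le m-1$, while the north and south neighbors of $x_{\tau_m}$ have first coordinate $m$, so they are unvisited at time $\tau_m$. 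Thus each $x_{\tau_m}$ is a western trapping opportunity, and there are infinitely many.

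Finally, thin these to times $T_1 < T_2 < \cdots$ with $T_{k+1} > T_k + 4$, let $A_k$ be the event that $T$ is assembled in the four steps following $T_k$, and observe that $A_k$ is determined by time $T_{k+1}$ while $\Pr(A_k \mid \mathcal F_{T_k}) = c$ on $\{T_k < \infty\}$. By Lévy's conditional Borel--Cantelli lemma, almost surely on $\{\sup_t(x_t)_1 = \infty\}$ infinitely many $A_k$ occur; but a single $A_k$ makes the walk cycle, hence visit only finitely many vertices, contradicting $\sup_t(x_t)_1 = \infty$. So that event is null, and by symmetry so are the other three; thus the walk almost surely stays in a bounded region and cycles. \textbf{The main obstacle} is the third paragraph: a priori a transient walk might dodge every trap by always stepping to new vertices adjacent to old ones, so that no \emph{fresh} opportunity ever arises. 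Recording the walk at the first time it reaches a new extreme coordinate value defeats this, since such a vertex is automatically flanked on its two perpendicular sides by never-visited vertices. The only other subtlety is bookkeeping: the trap needs several steps to lock in and the events are dependent, which forces the spacing of the opportunities and the use of the conditional (rather than classical) Borel--Cantelli lemma.
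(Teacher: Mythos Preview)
Your proof is correct and follows essentially the same route as the paper's ``Shells Method'': both arguments exhibit an infinite sequence of fresh vertices at which the trap $T$ of Example~\ref{latticetraps} is available with conditional probability $c=(1/4)^3(1/3)$---the paper via the first hitting point $v_n$ of each concentric $L^\infty$ square $S_n$, you via the first hitting times $\tau_m$ of new coordinate levels---and then conclude that the walk cannot avoid all of them, the paper by a direct infinite product $\prod(1-c)=0$ and you by the equivalent conditional Borel--Cantelli formulation. The paper's $L^\infty$ shells handle all four directions at once and so sidestep your symmetry case-split, but the two arguments are otherwise interchangeable.
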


\begin{proof}[\textbf{Shells Method}] Let $v_0$ denote the starting vertex. We'll prove that the probability of cycling is 1, i.e. the probability of escaping to infinity is zero. Once we have this fact, the probability of a transient basic walk will be $\Pr(\exists$ vertex $x$ which the walk can escape to infinity from$)=0$ because it's a countable union of events, each of probability zero. Consider the following picture:

\xymatrix{
  & & & & \vdots & & & & \\
 & \ar@{-}[d] \ar@{-}[r] \bullet & \bullet \ar@{-}[r] & \ar@{-}[r] \bullet & S_3 \ar@{-}[r] & \ar@{-}[r] \bullet & \bullet \ar@{-}[r] & \ar@{-}[d] \bullet & \\
 & \ar@{-}[d] \bullet & \bullet \ar@{-}[r] \ar@{-}[d] & \ar@{-}[r] \bullet & S_2 \ar@{-}[r] & \bullet \ar@{-}[r] & \bullet \ar@{-}[d] & \bullet \ar@{-}[d]& \\
 & \ar@{-}[d] \bullet & \bullet \ar@{-}[d] & \ar@{-}[r] \ar@{-}[d] \bullet & \ar@{-}[r] S_1 & \ar@{-}[d] \bullet & \bullet \ar@{-}[d] & \bullet \ar@{-}[d]& \\
\dots & \ar@{-}[d] \bullet & \bullet\ar@{-}[d] & \ar@{-}[d] \bullet &    v_0      & \bullet \ar@{-}[d] & \ar@{-}[d] \bullet & \bullet \ar@{-}[d]& \dots \\
 & \ar@{-}[d] \bullet & \bullet \ar@{-}[d] & \ar@{-}[r] \bullet & \bullet \ar@{-}[r] & \bullet & \bullet \ar@{-}[d] & \bullet \ar@{-}[d]& \\
 &\ar@{-}[d] \bullet & \bullet\ar@{-}[r] & \ar@{-}[r] \bullet & \bullet \ar@{-}[r] & \ar@{-}[r] \bullet & \bullet & \bullet \ar@{-}[d]& \\
 &\ar@{-}[r] \bullet &\bullet \ar@{-}[r] &\bullet \ar@{-}[r] &\bullet \ar@{-}[r] & \bullet \ar@{-}[r] & \bullet \ar@{-}[r]& \bullet & \\
}

Here the $S_n$ are concentric squares (a.k.a. shells) of side length $2n$ centered at $v_0$. For the random basic walk to be transient, the robot must pass $S_n$ for all $n$, since any basic walk on a finite graph trivially cycles. If the robot reaches $S_n$, then denote by $v_n$ the first vertex reached on $S_n$. Because $S_n$ is a square, $v_n$ cannot be a corner as there is no arc from the interior of $S_n$ to a corner. This means $v_n$ is the center of three collinear vertices on $S_n$, none of which have been visited, i.e. there is a constant, non-zero probability $c$ that the configuration $T$ will occur at $v$. Indeed, $c = (1/4)^3\cdot (1/3)$ as discussed in Example \ref{latticetraps}. Let $E_n$ be the event that the walk reaches $S_n$ and the first time it does so (i.e. at $v_n$) is not a trapping configuration.

%Because the trap exists entirely on the shell, it is independent of the walk up to that point. However, the $E_n$ are not independent

Let $E$ be the event that the robot gets infinitely far from $v_0$, i.e. the basic walk is transient. Clearly $E = \bigcap E_n$ because in order to get distance greater than $n$ from $v_0$ the robot must pass $S_n$ and must not be caught in a trap. For the basic walk to be transient, this must occur for all $n$. Conversely, if the robot passes each $S_n$ then the walk is clearly transient.

Note that $E_1\supset E_2 \supset E_3\dots$, since the robot reaching $S_n$ implies it reached $S_{n-1}$ (and didn't get trapped) but there are other ways to reach $S_{n-1}$. This means $\Pr(E) = \prod_{n\in \N}{\Pr(E_n \;|\; E_{n-1})}$. To bound $\Pr(E_n \;|\; E_{n-1})$ note that the probability of a path from $S_{n-1}$ to $S_n$ existing is $\leq 1$ and the probability that the first vertex visited on $S_n$ is not a trap is $\leq 1-c$, which is strictly less than 1 because $c>0$. Thus, $\Pr(E_n \;|\; E_{n-1})\leq 1*(1-c)$ for all $n$. This implies $\Pr(E)\leq \prod_{n\in \N}(1-c) = 0$.

\end{proof}

As usual for infinite random processes, the theorem only holds almost everywhere, as demonstrated by Example \ref{transient}. This theorem is not so surprising, since the analogous fact about recurrence on $\Z^2$ holds in the simple random walk and all its generalizations discussed in Chapter \ref{ch:history}. More surprising is that the same proof idea generalizes to show the following result:

\begin{thm}\label{lattices} For any $d$, the random basic walk on $\Z^d$ cycles with probability $1$.
\end{thm}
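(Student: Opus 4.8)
The plan is to run the Shells Method of Theorem~\ref{2d} essentially verbatim, with cubical shells in place of square ones and the higher-dimensional traps $T_d$ (described right after Example~\ref{latticetraps}) in place of $T$. As there, it suffices to show that from a fixed starting vertex $v_0$ the probability of escaping to infinity is $0$; the full statement then follows because ``the walk is transient from \emph{some} vertex'' is a countable union of null events.

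First I would fix $v_0$ and set $S_n=\{x\in\Z^d:\|x-v_0\|_\infty=n\}$, the boundary of the $\ell^\infty$-cube of side $2n$ about $v_0$, with $B_n$ the corresponding closed ball. If the walk never reaches $S_n$ then it is confined forever to the finite set $B_{n-1}$, hence cycles; so a transient walk must reach $S_n$ for every $n$. When it first does, call the landing vertex $v_n$. The one genuinely new ingredient is a geometric observation: $v_n$ is reached in one step from a vertex of $\ell^\infty$-norm at most $n-1$, and a single step of the walk changes exactly one coordinate by $\pm1$, so $v_n$ has \emph{exactly one} coordinate (measured from $v_0$) of absolute value $n$ -- say the $j$-th. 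Hence the hyperplane $H=\{x_j=(v_n)_j\}$ passes through $v_n$ and contains all $2(d-1)$ of the neighbours of $v_n$ obtained by changing one of the remaining coordinates by $\pm1$; a one-line check of $\ell^\infty$-norms shows each of these neighbours again lies on $S_n$, so none of them has been visited yet (as $v_n$ is the first vertex of $S_n$ to be visited). Since $2(d-1)\ge d$ for $d\ge2$, we may pick $d$ of these unvisited neighbours; together with $v_n$ they sit in exactly the positions required by a trap $T_d$. Moreover $v_n$ is itself freshly visited, so none of the arcs that $T_d$ constrains -- the outgoing arcs of $v_n$ and the outgoing arcs of the chosen $d$ neighbours -- has yet received a label. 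Therefore, conditional on the walk reaching $v_n$, the configuration $T_d$ forms with some constant probability $c_d=c_d(d)>0$, independent of $n$ and of the history of the walk.

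With this in place the rest is copied from the proof of Theorem~\ref{2d}. Let $E_n$ be the event that the walk reaches $S_n$ and the first vertex $v_n$ it reaches there is not caught in a $T_d$; the transience event is $E=\bigcap_n E_n$, and the $E_n$ are nested, $E_1\supseteq E_2\supseteq\cdots$, since reaching $S_n$ forces the walk to have passed $S_{n-1}$ without being trapped at $v_{n-1}$. Hence $\Pr(E)=\prod_n\Pr(E_n\mid E_{n-1})$; for each $n$ the conditional probability that a path from $S_{n-1}$ to $S_n$ exists is at most $1$ and the conditional probability that $T_d$ fails to form at $v_n$ is at most $1-c_d$, so $\Pr(E_n\mid E_{n-1})\le1-c_d<1$ and $\Pr(E)\le\prod_n(1-c_d)=0$. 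The degenerate case $d=1$ needs no shells: on $\Z$ the walk either runs off to $\pm\infty$ -- which would require an infinite sequence of independent events of probability $\tfrac12$ each -- or else it reverses direction at some newly visited vertex, and one checks directly that any such reversal immediately produces a $2$-cycle; so it cycles with probability $1$ there too.

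The only step beyond the two-dimensional argument is the middle paragraph: showing that the first vertex reached on a cubical shell always lies in the relative interior of a facet (equivalently, has a single extreme coordinate) and therefore sees a whole hyperplane's worth of unvisited neighbours. I do not expect this to be difficult -- it is an immediate consequence of ``one coordinate changes per step'' -- but it is the hinge of the whole proof, since it is precisely what guarantees that the $d$ vertices a trap $T_d$ needs are both present and still unlabelled, which in turn is what lets the escape probability be dominated by a fixed geometric series. It is also the reason one must use cubical rather than diamond-shaped ($\ell^1$) shells: on an $\ell^1$ shell the walk can first land at a corner, whose neighbours on the shell do not fit the pattern of any $T_d$.
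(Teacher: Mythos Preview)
Your proposal is correct and follows essentially the same approach as the paper: cubical shells $S_n$, first-hit vertex $v_n$, higher-dimensional trap $T_d$ using $d$ unvisited neighbours of $v_n$ on the shell, and the same nested-event infinite-product bound $\Pr(E)\le\prod_n(1-c_d)=0$; your treatment of $d=1$ is likewise the paper's argument. The one place you are more careful than the paper is the middle paragraph, where you actually verify that $v_n$ lies in the relative interior of a facet (exactly one extreme coordinate) and hence has $2(d-1)\ge d$ unvisited neighbours on $S_n$; the paper asserts this step more briefly, and your closing remark on why $\ell^\infty$ shells succeed where $\ell^1$ shells would not is a nice addition that the paper does not make explicit.
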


\begin{proof}
For $d=1$, it is easy to see that most labelings cause traps of size 2. Any time a pair of adjacent vertices has both arcs between them with non-equal labels, the robot will move back and forth between the two forever. Thus, in order for a random basic walk on $\Z$ to be transient, the labels must alternate $\dots,1,2,1,2,1,2,\dots$. This occurs with probability $0 = \lim_{n\to \infty} (1/2)^n$, so the random basic walk cycles with probability $1$.

For $d>2$, the proof uses the same ideas as in Theorem \ref{2d}, but with the trapping configurations $T_d$ instead of $T$ and with shells $S_n$ which are $d$-dimensional hypercubes of side length $2n$ centered at $v_0$. The configuration $T_d$ is defined such that the robot moves to a neighbor, then returns to $v$, then moves to a different neighbor, etc until $2d$ arcs have been traversed. At this point the port numbers at $v$ force the robot to retrace these same $2d$ steps ad infinitum. As above, let $v_n$ be the first vertex visited on $S_n$. The configurations $T_d$ can occur on any $d-1$ dimensional face of $S_n$ since the neighbors of $v_n$ will have no port numbers assigned when $v_n$ is first visited (as in the $d=2$ case).

The configuration $T_d$ occurs with constant probability $c_d=(1/2d)^d\cdot(1/2d)(1/(2d-1))\dots(1/(d+1))$ because of the $d$ port numbers at $v$ and the individual port number which must be assigned at each of the $d$ neighbors of $v$ used in the trap. Clearly $c_d>0$ so $1-c_d<1$. Define events $E_n$ and $E$ as in the $d=2$ case. The same proof (using $c_d$ instead of $c$) proves that $\Pr(E) = \prod_{n\in \N}(1-c_d) = 0$, i.e. the random basic walk on $\Z^d$ cycles.
\end{proof}

%We only looked at one kind of trap. What's the expected number of vertices hit before it hits any trap? What's the expected length of a trap?
%Can the theorem above give us an upper bound on the expected distance traveled?

The fact that the random basic walk continues to cycle as $d$ grows, rather than becoming transient at some $d$, proves that the random basic walk is very different from types of random walks studied previously. The proof of Theorem \ref{lattices} will generalize to any class of graphs on which shells can be formed and on which the number of neighbors on a shell is larger than the number needed to form a trap, i.e. a class of graphs satisfying certain expander properties.

The method of proof above actually gives a bit more than just the statement of the theorem--it can be used to get a bound on the expected number of vertices visited by the random basic walk. For instance, in $\Z^2$ we have $c=1/192$. If the robot only moves in one direction, say east, then every step reaches a new $S_n$ and has a chance of yielding a trap. The event of trapping is then described by a geometric distribution, and the expected number of vertices visited before a trap is reached is 192. Including the trapping vertices boosts this overall number to 194.

The maximum number of vertices visited is obtained by the spiral of Example \ref{spiral}. With this labeling the robot could visit all vertices inside $S_n$ before reaching $S_{n+1}$. By the argument above the robot is expected to reach 192 shells, so this means the robot visits $192^2+2 = 36866$ vertices. Hence, $E($vertices visited$) \leq 36866$ for the random basic walk in $\Z^2$, and this bound also holds for the graphs $G_{k,n}$ for $k,n\geq 192$. Consideration of more ways to trap than just the configuration $T$ would boost the probability of trapping and thereby give a better bound than 36866.

Determining the expected number of vertices visited precisely appears to be a non-trivial problem. We will return to this question in Chapter \ref{conclusion}. We remark here that the question of the expected maximum length of a cycle is much simpler. Proposition \ref{prop:cyclelength} shows that there are cycles of arbitrarily large size which the robot can fall into, but in practice the random basic walk will most likely become trapped in a small cycle. Note that there is a labeling where every vertex is in a 4-cycle, but this labeling occurs with probability zero, so does not contradict Proposition \ref{prop:cyclelength}.

\section{Transience and Cycling on Regular Graphs}
\label{sec:regular}

Robots should be able to explore regions which are not square-shaped, so this leads to consideration of graphs other than $\Z^d$. For regular graphs of degree $d$, the random basic walk requires at least $d+1$ vertices to form a trap, due to the global clock mentioned in Example \ref{stargraph}. This global clock can tell any vertex which label to use based solely on how many steps have been taken so far. The reason is that all vertices can agree on how to get from a global time $t$ to a permissible label $i$--namely, set $i = t \mod d$ and if this value is 0 then set $i=d$.

Recall that the proof of Theorem \ref{lattices} generalizes to prove some class of expander graphs have cycling random basic walk. Unfortunately, this class of graphs does not contain all regular graphs. In particular, it does not contain the hexagonal lattice, due to the considerations in Example \ref{graphtraps} regarding the fact that this graph is triangle free. Thus, a new proof method is needed to prove the random basic walk cycles on the hexagonal lattice, and this proof method will generalize to arbitrary regular graphs (where $G$ is always assumed to be countable, connected, and locally finite).

\begin{thm} \label{hexagonal} On the hexagonal lattice, the random basic walk cycles with probability 1.
\end{thm}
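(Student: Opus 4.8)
The plan is to run the \textbf{shells method} used for Theorems \ref{2d} and \ref{lattices}, but with the one–step trapping configuration $T$ replaced by the spire $S$ of Example \ref{graphtraps}, since $T$ has no analogue here: the hexagonal lattice is triangle–free, so a trap at $v$ cannot be built from $v$ and its neighbors alone. Write $v_0$ for the starting vertex, let $B_n$ be the ball of radius $n$ about $v_0$ in the graph metric, and let $S_n=\{u : d(v_0,u)=n\}$. A transient walk must reach every $S_n$; let $v_n$ be the first vertex of $S_n$ the walk visits. The structural point is that at the moment $v_n$ is first reached every previously visited vertex lies in $B_{n-1}$, so the whole set $\{u : d(v_0,u)\ge n\}\setminus\{v_n\}$ — in particular all of $S_n$ other than $v_n$ — is still unvisited, hence no outgoing arc of any such vertex has yet received a port number.

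First I would record an elementary geometric lemma: from every $v_n\in S_n$ there is a path $v_n=u_0,u_1,u_2,u_3$ with $u_1,u_2,u_3$ at distance $\ge n$ from $v_0$ (so, heading outward, still fresh). This holds because $\{d(v_0,\cdot)\ge n\}$ is connected and unbounded and meets the $1$–neighborhood of $v_n$. Then, exactly as in Example \ref{graphtraps}, I would show there is a constant $c>0$, independent of $n$ and of the history, such that once the robot enters $v_n$ (from its predecessor $w\in B_{n-1}$) the port numbers subsequently assigned at $v_n,u_1,u_2,u_3$ realize a spire $S$ on $\{v_n,u_1,u_2,u_3\}$ with probability at least $c$: each of these four vertices is fresh of degree $3$, so any prescribed local port orientation occurs with probability bounded below by a positive constant, and only finitely many choices are involved; the compatibility condition with the (already determined) incoming port $i$ at $v_n$ merely selects, for each $i\in\{1,2,3\}$, one such bounded family of good configurations. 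Crucially a spire once installed is permanent, since the port orientations at its four vertices are then completely determined and route the robot in a closed loop that never leaves $\{v_n,u_1,u_2,u_3\}$, and in particular never returns to $w$.

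With this the argument of Theorem \ref{2d} goes through verbatim. Let $E_n$ be the event that the walk reaches $S_n$ and, the first time it does, does not fall into a trap; the escape event $E$ satisfies $E\subseteq\bigcap_n E_n$, the $E_n$ are nested, and $\Pr(E_n\mid E_{n-1})\le 1-c$ by the previous paragraph, so $\Pr(E)\le\prod_{n\in\N}(1-c)=0$. A countable union over all possible starting vertices then gives that the walk is transient with probability $0$, i.e. it cycles with probability $1$. The same template — graph–metric shells, plus a bounded fresh trap gadget planted outward of $v_n$ — is exactly what will be used for arbitrary regular graphs in the next theorem; the hexagonal lattice is the instructive special case because triangle–freeness forces the gadget to be the spire $S$ (or the star $V$, when $v_n$ has two fresh neighbors) rather than the collinear configuration $T$.

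I expect the main obstacle to be the uniform lower bound $c$: one must verify that the spire can always be anchored at $v_n$ in fresh territory regardless of which of the finitely many local pictures $v_n$ presents (how many of its three neighbors are "backward"), and that matching the spire to the already–fixed incoming port $i$ does not drive the good probability to $0$ as $n\to\infty$. Both reduce to a finite case check of $3$–regular local configurations together with the geometric lemma above, so I anticipate only bookkeeping rather than a genuine difficulty.
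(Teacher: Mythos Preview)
Your proposal is essentially the paper's own argument, which it calls the \textbf{Spires Method}: place graph-metric shells around $v_0$, look at the first arrival vertex $v_n$ on each shell, plant the spire $S$ of Example~\ref{graphtraps} in fresh territory beyond $v_n$, and multiply the survival probabilities $(1-c)$.

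The one difference worth flagging is the shell spacing. The paper takes $S_n=\{v:d(v_0,v)=4n\}$ and $D_n=\{v:d(v_0,v)\le 4n\}$, so that the three spire vertices beyond $v_n$ sit strictly in the annulus between $D_n$ and $S_{n+1}$. This buys two things at once: freshness of the spire vertices is immediate, and the trap attempts at consecutive shells are automatically disjoint, so the nesting $E_1\supseteq E_2\supseteq\cdots$ and the bound $\Pr(E_n\mid E_{n-1})\le 1-c$ are clean. With your spacing $S_n=\{d=n\}$ the freshness argument still works (at time $\tau_n$ every vertex at distance $\ge n$ other than $v_n$ is untouched), but a spire anchored at $v_n$ can reach into $S_{n+1},S_{n+2},S_{n+3}$, so being trapped at $v_n$ does not preclude ``reaching $S_{n+1}$'', and your $E_n$ are not literally nested. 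This is only bookkeeping: either thin to shells at radii $4n$ as the paper does, or keep your shells and argue instead that $\Pr(\tau_{n+4}<\infty)\le(1-c)\Pr(\tau_n<\infty)$. The paper also remarks afterward that using the star $V$ in place of $S$ lets one reduce the spacing, exactly as you note parenthetically.
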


\begin{proof}[\textbf{Spires Method}] Recall the spires from Example \ref{graphtraps}. Denote the starting location of the random basic walk by $v_0$. Let $S_n = \{v\;|\;d(v_0,v)=4n\}$ and let $D_n = \{v\;|\;d(v_0,v)\leq 4n\}$. Let $E_n$ be the event that the robot reaches $S_n$ and that the first vertex where that occurs (call it $v_n$) does not have a spire in the region between $D_n$ and $S_{n+1}$. Note that spires consist of 3 vertices past $v_n$, but the number of vertices between $D_n$ and $S_{n+1}$ is 4, so the spire will be completely contained in the region between $D_n$ and $S_{n+1}$. Spires in this region are guaranteed to consist of vertices which have not been visited by the robot before, i.e. the port orientations are independent of the history of the random basic walk.

Independence of port orientations guarantees that the probability of a spire from $v_n$ towards $S_{n+1}$ is a constant probability $c>0$. For the spire drawn in Example \ref{graphtraps}, the probability is $c=(1/3)^3(1/2)^3$ because of the six arcs used. The probability of getting from $S_n$ to $S_{n+1}$ without re-entering $D_n$ is $\leq 1$. Thus, $\Pr(E) = \prod_{n\in \N}{\Pr(E_n \;|\; E_{n-1})} \leq \prod_{n\in \N}{(1-c)} = 0$, proving that the random basic walk cycles with probability 1.
\end{proof}

It is likely that analogues for Example \ref{transient} exist for the hexagonal lattice, i.e. that it is possible to have a transient labeling. The author does not know if there is a way to place such a labeling on an arbitrary regular graph $G$.

Note that the proof above generalizes for any regular graph $G$ of degree $d>2$. The case for $d=2$ is analogous to the argument for $\Z$ in the previous section. For $d>2$, replace the $4n$ in the definition of $S_n$ and $D_n$ by $(d+1)*n$ and define a spire to be $v_n$ followed by $d$ other vertices between $D_n$ and $S_{n+1}$. These spires will occur with constant, non-zero probability, so the same computation as above proves $\Pr(E)=0$, i.e. the random basic walk cycles with probability 1. Summarizing:

\begin{thm} \label{thmregular} On any locally finite, $d-$regular graph $G$, the random basic walk cycles with probability 1.
\end{thm}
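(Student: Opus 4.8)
The plan is to adapt the Spires Method from Theorem \ref{hexagonal} to an arbitrary locally finite, connected, $d$-regular graph $G$, handling the cases $d=1,2$ separately and $d \geq 3$ by the shell-and-spire construction. For $d=1$ the graph is a single edge (or, if we allow it, a disjoint pair of vertices), so the walk is trivially finite and cycles. For $d=2$ the graph $G$ is either a finite cycle (again trivially cycling) or a bi-infinite path isomorphic to $\Z$; in the latter case the argument for $\Z$ given in the proof of Theorem \ref{lattices} applies verbatim — transience forces the labels to alternate $\dots,1,2,1,2,\dots$, an event of probability $0$. So the substantive case is $d \geq 3$.

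For $d \geq 3$, fix a starting vertex $v_0$ and set $S_n = \{v : d(v_0,v) = (d+1)n\}$ and $D_n = \{v : d(v_0,v) \leq (d+1)n\}$. The key structural observation is that from any vertex $v_n$ first reached on $S_n$, there is a ``spire'' trap available: a simple path $v_n = u_0, u_1, \dots, u_{d-1}$ of $d-1$ new vertices (possibly with one extra return arc, as in Example \ref{graphtraps}), all lying strictly between $D_n$ and $S_{n+1}$ because the radial gap $(d+1)n$ to $(d+1)(n+1)$ leaves room for $d$ vertices past $v_n$ while a spire needs only $d$. Since $G$ is $d$-regular and connected, such a path of length $d-1$ leaving $v_n$ and staying in the annular region exists — this needs a short argument that one can always route $d-1$ steps outward without being forced back into $D_n$, which follows from regularity (each vertex has $d$ neighbors, and for $n$ large only boundedly many point ``inward''). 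Because these $d$ vertices are unvisited when $v_n$ is first reached, the port orientations on their arcs are independent of the walk's history, so the spire configuration — which is a finite pattern of arc labels — occurs with a fixed constant probability $c = c(d) > 0$ (e.g. a product of terms like $(1/d)$ and $(1/(d-1))$, one per arc used). The global-clock remark from Example \ref{stargraph} is what guarantees the trap actually catches the robot: on a $d$-regular graph the label used at time step $t$ is $t \bmod d$, so the pattern of labels forming the spire is determined purely by arrival times and hence the trap fires deterministically once the labels are in place.

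With this in hand, let $E_n$ be the event that the robot reaches $S_n$ and that at its first arrival vertex $v_n$ no spire occurs in the annular region, and let $E = \bigcap_n E_n$ be the event of transience. As in Theorem \ref{2d}, transience requires passing every $S_n$ without being trapped, so $E \subseteq \bigcap_n E_n$, and the nesting $E_1 \supseteq E_2 \supseteq \cdots$ gives $\Pr(E) = \prod_n \Pr(E_n \mid E_{n-1})$. Conditioned on $E_{n-1}$, the probability of getting from $S_n$ to $S_{n+1}$ without re-entering $D_n$ is at most $1$, while the probability that the spire from $v_n$ does \emph{not} occur is at most $1-c$; hence $\Pr(E_n \mid E_{n-1}) \leq 1-c$ and $\Pr(E) \leq \prod_n (1-c) = 0$. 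Finally, one passes from ``the walk from $v_0$ does not escape'' to ``the walk cycles with probability $1$'' exactly as in Theorem \ref{2d}: the set of vertices from which the walk could escape is countable, so the union bound kills it.

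The main obstacle I expect is the geometric/combinatorial step asserting that a spire of $d-1$ new vertices can always be found strictly inside the annulus between $D_n$ and $S_{n+1}$, with the constant $c$ genuinely bounded away from $0$ uniformly in $n$. In $\Z^d$ and the hexagonal lattice this is transparent from the explicit local structure, but for a general $d$-regular graph one must argue that $v_n$ — which is on the sphere of radius $(d+1)n$ — has enough neighbors pointing ``outward'' (distance nondecreasing) to start such a path, and that the path can be continued $d-1$ steps without the distance dropping back to $(d+1)n$ or below; equivalently that the annulus is ``thick enough'' to host the trap. Regularity and the width $d+1$ of the annulus should make this work, but stating it cleanly — and making sure the spire's defining arc labels are all among arcs whose endpoints are unvisited, so that independence and the constant $c$ are valid — is the delicate part.
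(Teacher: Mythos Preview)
Your proposal is essentially identical to the paper's argument: handle $d\le 2$ separately, and for $d\ge 3$ run the Spires Method of Theorem~\ref{hexagonal} with shells $S_n$ at radius $(d+1)n$ and a spire of $d$ new vertices at the first hit $v_n$, yielding $\Pr(E)\le\prod_n(1-c)=0$. The geometric obstacle you flag---that a spire of unvisited vertices must actually exist in the annulus at $v_n$, with $c$ uniform in $n$---is precisely the step the paper also leaves implicit, so your level of rigor matches the original.
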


An interesting corollary is that a random basic walk on the bi-infinite binary tree must cycle. In the binary tree the probability of going up at any given vertex is $1/3$ while the probability of going down is $2/3$. It is tempting to compare the random basic walk on this vertex with a biased random walk on $\Z$ and conclude that the robot escapes to the downwards infinity. However, the spires argument demonstrates that this intuition is off the mark, since a walk on the binary tree must cycle.

As in the case of Theorem \ref{lattices}, this proof method gives an upper bound on the expected number of vertices visited, though not necessarily the expected maximum number (for a longest tour). If one considers more spires, e.g. spires which are jagged in a different way than the one shown in Example \ref{graphtraps}, then the probability of trapping via a spire is greater than $c$, so the upper bound can be improved. Furthermore, using the traps $V$ from Example \ref{graphtraps} will also give a proof and will vastly reduce the upper bound on the expected number of vertices visited because the shells will only need one layer of vertices between them rather than the 3 layers required for spires.

We conclude this section with a result that shows the question of expected maximum length of a cycle is unbounded. This surprising fact gives the opposite intuition about the random basic walk on regular graphs than the theorem above, but does not contradict the theorem because these cycles are very rare.

\begin{prop} \label{prop:pathlength}
Let $G$ be a locally finite, $d$-regular graph with port orientations selected uniformly at random as in Section \ref{introduction}.\ref{sec:basic}. Then the following statement holds with probability 1: for all $n\in \N$ there are infinitely many pairs $(v,\ell)$ such that a random basic walk starting at $v$ with initial port $\ell$ will visit $n$ vertices.
\end{prop}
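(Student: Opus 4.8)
The plan is to produce, for each fixed $n$, infinitely many starting pairs whose ``the walk marches down a prescribed simple path'' events are mutually independent and each of positive probability, and then apply the second Borel--Cantelli lemma; intersecting the resulting almost-sure events over $n\in\N$ yields the proposition. Concretely, it suffices to show that for each fixed $n\in\N$ the event $B_n$ that there exist infinitely many pairs $(v,\ell)$ from which the random basic walk visits at least $n$ distinct vertices has probability $1$, since $\Pr(\bigcap_{n}B_n)=1$ as a countable intersection of probability-$1$ events. Note that a connected, infinite, locally finite, $d$-regular graph necessarily has $d\ge 2$.

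First I would establish the purely graph-theoretic fact that $G$ contains infinitely many pairwise vertex-disjoint simple paths on $n$ vertices. This is done greedily: an infinite connected locally finite graph contains a simple path on $n$ vertices (take a shortest path from any vertex to a vertex at distance $\ge n-1$, which must exist since otherwise $G$ would lie in a ball of finite radius and hence be finite by local finiteness); and if $W$ is the finite set of vertices used so far, then $G-W$ has only finitely many connected components---each component is joined to $W$ by an edge since $G$ is connected, and there are at most $\sum_{w\in W}\deg(w)$ such edges---so at least one component $C$ of the infinite graph $G-W$ is infinite, and $C$ again contains an $n$-vertex simple path avoiding $W$. Iterating gives paths $P^{(1)},P^{(2)},\dots$ with $P^{(i)}=(u^{(i)}_0,\dots,u^{(i)}_{n-1})$.

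Next, fix the initial port $\ell_0:=1$ and set $\ell_{k+1}:=(\ell_k\bmod d)+1$, so every $\ell_k\in\{1,\dots,d\}$. For a simple path $P=(u_0,\dots,u_{n-1})$ let $A_P$ be the event that for every $k=0,\dots,n-2$ the port label assigned at $u_k$ to the arc $u_k\to u_{k+1}$ equals $\ell_k$. Unwinding the deterministic transition rule shows that on $A_P$ the walk started at $(u_0,1)$ leaves $u_0$ along $u_0\to u_1$, hence enters $u_1$ along an arc labeled $\ell_0$, therefore leaves $u_1$ along its arc labeled $\ell_1$, namely $u_1\to u_2$, and so on, visiting in turn the $n$ distinct vertices $u_0,\dots,u_{n-1}$. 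Since $A_P$ constrains the labeling only at the $n-1$ distinct vertices $u_0,\dots,u_{n-2}$, and at each such vertex a prescribed outgoing arc receives a prescribed label with probability $1/d$ independently across vertices, $\Pr(A_P)=d^{-(n-1)}=:p_n>0$. Because the paths $P^{(i)}$ are pairwise vertex-disjoint, the events $A_{P^{(i)}}$ depend on labelings at disjoint vertex sets and so are jointly independent, with $\sum_i\Pr(A_{P^{(i)}})=\sum_i p_n=\infty$; the second Borel--Cantelli lemma then gives that almost surely $A_{P^{(i)}}$ holds for infinitely many $i$, and for each such $i$ the (distinct) pair $(u^{(i)}_0,1)$ witnesses a walk visiting at least $n$ vertices. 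Hence $\Pr(B_n)=1$, and intersecting over $n$ proves the proposition.

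The argument has no serious obstacle; the two steps that need care are the combinatorial lemma in the second paragraph---in particular the observation that deleting a finite vertex set from an infinite connected locally finite graph leaves an infinite component---and the bookkeeping in the third paragraph verifying that the single-arc constraints $L(u_k\to u_{k+1})=\ell_k$ genuinely force the automaton to traverse $P$, which hinges on tracking the port recursion $\ell_{k+1}=(\ell_k\bmod d)+1$ and on each $\ell_k$ being a legal port at $u_k$, i.e. on $d$-regularity.
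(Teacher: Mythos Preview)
Your proof is correct and follows essentially the same strategy as the paper: for fixed $n$, exhibit infinitely many independent ``the walk marches down a prescribed path of length $n$'' events, each with probability $(1/d)^{n-1}$ (the paper writes $(1/d)^n$), and conclude via divergence of the sum that infinitely many occur almost surely, then intersect over $n$. Your argument is in fact a bit more careful than the paper's---you explicitly construct pairwise vertex-disjoint paths via the infinite-component lemma, whereas the paper separates the starting vertices by distance $2n$ and asserts independence without quite nailing down disjointness of the paths, and you invoke the second Borel--Cantelli lemma directly rather than the paper's remove-and-repeat maneuver.
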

%Fix $n$ and $d$, and let $E_{n,v}$ be the event that there is no path out of $v$ of length $n$. Let $c_n = P(E_{n,v})$ and note that $c_n$ does not depend on $v$ by symmetry.

\begin{proof} Fix $n$ and let $v$ be an arbitrary vertex. Because $G$ is infinite, there must be some path in $G$ of length $n$ which always moves away from $v$ in the graph metric. Let $P_{n,v}$ be the event that one such path has the appropriate labels so that a random basic walk starting at $v$ with initial port $\ell$ will move directly away from $v$ along the path for $n$ steps. The probability of $P_{n,v}$ is a constant, non-zero number $c_n=(1/d)^n$ which does not depend on $v$. Select a sequence of vertices $(v_1,v_2,\dots)$ which are all distance at least $2n$ away from each other, so the events $P_{n,v_i}$ are independent from each other. Then the probability that none of the events $P_{n,v_i}$ occur is $(1-c_n)*(1-c_n)*(1-c_n)*\dots = 0$. This proves there is some $v_j$ which has a path of length $n$ going out.

Let $w_1=v_j$. Removing $v_j$ from the list $(v_1,v_2,\dots)$, one can repeat the same argument and conclude that some other $v_k$ must have a path of length $n$ going out, since the probability of not having such a $v_k$ is $(1-c_n)*(1-c_n)*\dots=0$. Set $w_2 = v_k$. Repeating this ad infinitum proves that there is an infinite sequence $(w_1,w_2,\dots)$ each of which has a path of length $n$ going out.
\end{proof}

Note that the random basic walk starting at a random vertex $v_0$ is unlikely to visit any of the $w_i$. Extracting bounds from Theorem \ref{thmregular} on the expected length of the random basic walk shows that the random basic walk is unlikely to visit a large number of the vertices $v_i$. The probability that a given $v_i$ has a path of length $n$ going out is $(1/d)^n$, and the random basic walk will most likely not visit anywhere near $d^n$ many of the $v_i$. From the proposition above that there are paths of arbitrary length, it is easy to conclude that there must also be cycles of arbitrary length.

\begin{prop} \label{prop:cyclelength}
Let $G$ be a locally finite, $d$-regular graph with $d>2$ and with port orientations selected uniformly at random as in Section \ref{introduction}.\ref{sec:basic}. Then the following statement holds with probability 1: for all $n\in \N$ there is some vertex $v$ which is contained in a cycle of length greater than $n$. Indeed, there are infinitely many such $v$.
\end{prop}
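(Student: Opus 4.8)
The plan is to follow the template of the proof of Proposition~\ref{prop:pathlength}. For each fixed $n$ we exhibit a \emph{local} event of constant probability $c_n>0$ --- the presence, near a given vertex, of a periodic orbit of the basic walk of length $>n$ passing through that vertex --- whose probability does not depend on where in $G$ we look. We then place infinitely many pairwise-disjoint copies of this configuration at mutually distant vertices, so that the corresponding events are independent, and conclude as in Proposition~\ref{prop:pathlength} (the infinite product $\prod_i(1-c_n)$ of the constant $1-c_n<1$ is $0$) that with probability $1$ infinitely many copies occur. Intersecting over $n\in\N$ gives the displayed statement; and since the copies are disjoint and each supplies at least one vertex lying on a cycle of length $>n$, the ``infinitely many such $v$'' clause is automatic.

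To build the local configuration I start, as in Proposition~\ref{prop:pathlength}, from a path $u_0,u_1,\dots,u_m$ in $G$ that always moves away from $u_0$, with $m$ to be chosen large in terms of $n$ and $d$. The naive candidate is the ``out-and-back'' orbit $u_0\to\cdots\to u_m\to u_{m-1}\to\cdots\to u_0\to u_1\to\cdots$, but two features of regular graphs constrain it. By the global clock on regular graphs (Example~\ref{stargraph}), the port used at step $t$ depends only on $t\bmod d$; hence every arc of a periodic orbit is traversed only at a single residue mod $d$, namely its label, which forces the period to be a multiple of $d$ and forces the two arcs $u_i\to u_{i+1}$ and $u_i\to u_{i-1}$ that the out-and-back orbit uses at an interior vertex $u_i$ to be pinned to residues that coincide whenever $2i\equiv 0\pmod d$. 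So the pure out-and-back orbit clashes at interior vertices once $m$ is large, and a genuine construction is needed.

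The remedy is a \emph{comb}. Since $d>2$, each $u_i$ has a neighbour $z_i$ off the path; attach there a short tooth $u_i,z_i^{(1)},\dots,z_i^{(\ell)}$ (with $z_i^{(1)}=z_i$) and have the robot, on its outward pass only, detour out and back along the tooth before advancing, i.e. $\cdots u_{i-1}\to u_i\to z_i^{(1)}\to\cdots\to z_i^{(\ell)}\to\cdots\to z_i^{(1)}\to u_i\to u_{i+1}\cdots$. Choosing the tooth length so that the ``advance spacing'' $2\ell+1$ is congruent to $-1$ modulo $d$ --- that is $\ell=d-1$ when $d$ is odd and $\ell=\tfrac{d}{2}-1$ when $d$ is even --- makes the $i$-dependence in the residues at $u_i$ cancel, so the outward exit (to $u_{i+1}$), the detour exit (to $z_i^{(1)}$) and the dip-free return exit (to $u_{i-1}$) at each $u_i$ land at three distinct residues uniformly in $i$; similarly inside each tooth no two exits at a common vertex collide. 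After treating the two end vertices $u_0,u_m$ and tuning the end excursions so the total period becomes a multiple of $d$, one gets a periodic orbit of length linear in $m$, hence $>n$, supported on a comparable number of fresh vertices, at each of which at most $d$ arcs are constrained to distinct prescribed labels. Thus the configuration has probability at least $(1/d!)^{O_d(m)}=:c_n>0$, independent of $u_0$. (If $G$ happens to contain arbitrarily long cycles one can dispense with the comb: a long graph cycle, padded to a length divisible by $d$ by one or two $2$-step side excursions, meets each of its vertices once per period and so has no label clashes; the comb is needed only for tree-like $G$.)

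The main obstacle is the comb construction itself: performing the (routine but finicky) residue bookkeeping at the vertices $u_i$, $z_i^{(k)}$, $u_0$ and $u_m$ to verify that the tooth length can indeed be chosen, uniformly in $d>2$, so that the period is divisible by $d$ while every per-vertex labelling stays consistent. Two minor points also need a line: that an infinite, locally finite, $d$-regular graph contains pairwise-disjoint combs of unbounded size (take an infinite geodesic ray, which exists by K\"onig's lemma, split it into far-apart segments, and use the spare neighbours of its vertices as teeth), and that once the constant-probability local event is established the Borel--Cantelli/product argument of Proposition~\ref{prop:pathlength} goes through verbatim, yielding with probability $1$ and for every $n$ infinitely many vertices on cycles of length greater than $n$.
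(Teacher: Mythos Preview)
The paper's own proof is much shorter than your proposal. It takes $M$ to be the least multiple of $d$ exceeding $n$, invokes Proposition~\ref{prop:pathlength} to obtain infinitely many vertices $w_i$ each admitting an outward path of length $M$, and then simply asserts that each such path is a \emph{spire}---an out-and-back periodic orbit---with some constant probability $k_n>0$, after which the infinite-product argument of Proposition~\ref{prop:pathlength} is repeated verbatim to produce infinitely many $w_i$ on cycles of length at least $M>n$.

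Your label-clash analysis is correct and in fact exposes a gap the paper glosses over. On a $d$-regular graph with the global clock, the outward exit from the interior vertex $u_i$ carries label $\ell_1+i\pmod d$ while the return exit carries $\ell_1+2M-i\pmod d$; since $d\mid M$ these coincide whenever $2i\equiv 0\pmod d$, which happens for some $0<i<M$ as soon as $M>d$ when $d$ is odd, and already at $i=d/2$ when $d$ is even. So for large $n$ the naive out-and-back spire is impossible and the paper's $k_n$ is actually $0$. A construction along the lines of your comb is therefore genuinely required; in this sense your proposal is more careful than the paper's.

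That said, the comb is not yet a proof either. You defer the residue bookkeeping, and more seriously you assume that an arbitrary infinite $d$-regular graph contains infinitely many pairwise-disjoint combs of the prescribed shape. This is not automatic: on the $3$-regular ladder $\Z\times\{0,1\}$, a geodesic along one rail has only the other rail available for off-path neighbours, so teeth of length $\ell=d-1=2$ at consecutive base vertices necessarily share vertices. Your parenthetical fallback---use long graph cycles of length divisible by $d$ when $G$ has them, since a cycle visiting each vertex once per period has no clashes---does rescue the ladder, but to finish you must make ``tree-like'' precise, show the two cases are exhaustive, and in the cycle case argue that infinitely many pairwise-disjoint long cycles exist so that the independence needed for the Borel--Cantelli step is available.
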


\begin{proof}
Fix $n\in \N$ and let $M$ denote the first multiple of $d$ which is greater than $n$. By Proposition \ref{prop:pathlength} there is an infinite sequence of vertices $(w_1,w_2,\dots)$ each having a path of length $M$ going out. For each $w_i$ there is a constant, nonzero probability $k_n$ that this path is actually a spire, i.e. that the random basic walk will move out to the end of this path, then turn around and return to $w_i$. Because $d$ divides the length of the path, this spire will be a cycle using $M$ vertices, i.e. the random basic walk will traverse it back and forth forever. Note that $d>2$ is needed in order for the random basic walk to return along the spire rather than getting trapped in a cycle on the final two vertices of the spire.

The probability that none of the paths from Proposition \ref{prop:pathlength} are spires is $(1-k_n)*(1-k_n)*\dots = 0$. Thus, there must be some $w_i$ which is the first vertex in a cycle of length greater than $n$. Set $z_1=w_i$ and remove $w_i$ from the sequence $(w_1,w_2,\dots)$. Repeating the argument above proves that there is some $z_2 = w_j\neq w_i$ which is the first vertex in a cycle of length greater than $n$. Continuing forever gives an infinite sequence $(z_1,z_2,\dots)$ where each $z_i$ is the first vertex in a cycle of length greater than $n$.
\end{proof}

These propositions prove that there is no bound on the maximum length of a path or cycle which the random basic walk could hit. They do not give the expected average length of a cycle or even a bound for it. We return to this questions in Chapter \ref{conclusion}.

\section{Transience and Cycling on Locally Finite Graphs}
\label{sec:locallyfinite}

%On the spires argument failing in general: If we know that every vertex has a neighbor of larger degree then all we need is that you go to that vertex by an arc labeled $i$ and return by $i-1$. But then that vertex needs to have a neighbor which sends an $i-2$ its way, etc. This becomes a very restrictive property on the graph, namely this is an Expander Property.

We conclude this chapter with a theorem which subsumes the previous theorems and for which the hypotheses cannot be weakened further. In particular, we shall no longer assume $G$ is regular, though we retain our standing assumptions that $G$ is connected, countable, and locally finite. Note that the spires argument does not generalize to non-regular graphs because there is no guarantee that such a spire at a vertex $v$ can be used to make a trap. For instance, if $v$ has degree at least 4 but the only neighbors of $v$ have degree 2 then there is no assignment of port numbers to the vertices on the spire which will force a cycle, because it is impossible to enter $v$ by an arc labeled 3 or 4. To make the spires argument generalize would require an assumption of a strong expander property on $G$.

Rather than placing such an assumption, we seek different types of trapping configurations. The solution is to generalize configuration $V$ given in Example \ref{graphtraps}. For a fixed vertex $v$ of degree $d(v)$ which is visited on the random basic walk by a port $i$ from $v_{-1}\to v$, consider the following configuration $C_v$: any port orientation at $v$ is acceptable as long as the arc from $v\to v_{-1}$ is labeled by $i-1$. Furthermore, for every neighbor $w\neq v_{-1}$, if the arc from $v$ to $w$ is labeled by $j$, then the arc from $w$ to $v$ must be labeled by $j' = (j \mod{d(w)})+1$.

The configuration $C_v$ forces a trap at $v$ because it is impossible for a random basic walk to get distance 2 away from $v$ once reaching $v$. However, this configuration may be impossible for some $v$. Several things can go wrong, all having to do with independence of the occurrence of $C_v$ from the random basic walk leading up to $v$. One problem is that a neighbor $w$ may have been visited before and the port number $j'$ out of $w$ may have already been assigned. Indeed, the probability that the correct port number is on the arc $w\to v$ is only $1/d(w)$.

Another problem is that $v_{-1}$ might have degree larger than $d(v)+1$. If this occurs, and if $i>d(v)+1$ then the arc from $v$ to $v_{-1}$ cannot have the label $i-1$ required, because $i-1 > d(v)$. If $d(v_{-1})\leq d(v)$ then $C_v$ occurs with probability $p_v = (1/d(v))*\prod{1/d(w)}$ where this product is over all neighbors $w\neq v_{-1}$. To see this, it is best to view the random basic walk as only labeling one arc per step, as discussed in Section 1.\ref{sec:basic}. In order to guarantee that the configurations $C_v$ will occur, the assumption needed on $G$ is that there is a constant $D$ such that $d(v)<D$ for all $v$ (i.e. $G$ has \emph{bounded degree}).

%If we want to be more concrete and force the port orientation at $v$ to be in clockwise order, then the probability would be $(1/d(v)!)\prod{1/d(w)}$.

%You might object that the event "w points to v with label j" can be impossible if we use the sequential definition of the process. So maybe the better way is to think of this proof in terms of the original formulation where all labels are fixed at the start.
%By definition, every time a new vertex v is reached, all incoming arcs except the one most recently used have not yet been labeled.

%Back when I thought spires would generalize...
%A different way to define the labeling is: if the robot enters $v$ along label $i$ and $d(v)>i$ then it leaves along label $i+1$. Otherwise it leaves along label $1$. There are problems with this system. For one, a basic walk will use label 1 much more than any other label because it's more likely to pick 1 than another label if moving from a higher degree vertex to a lower degree one. Secondly, if we're trying to construct a trap at $v$ and we entered $v$ by a label $i>3$ then there is no guarantee that a spire will exist. It could be the case that all the neighbors of $v$ have degree 2 and so paths exist but there is no way to enter $v$ by label $i$ after completing a spire. So we will use the first method and not this method.
%This method generalizes nicely to arbitrary locally finite graphs, as long as we have a fixed constant $M$ of radius for our spheres around the starting vertex. The same proof proves:

\begin{thm} \label{maintheorem} On any graph $G$ with all vertex degrees bounded by a constant $D$, the random basic walk cycles with probability 1.
\end{thm}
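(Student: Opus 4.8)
The plan is to run the shells argument of Theorems~\ref{2d}, \ref{lattices} and~\ref{thmregular} one last time, using the configurations $C_v$ introduced just before the statement in place of the traps $T$, $T_d$ and the spires used there. Fix the starting vertex $v_0$ and set $S_n=\{v:d(v_0,v)=n\}$, $D_n=\{v:d(v_0,v)\le n\}$. By local finiteness each $D_n$ is finite, and a basic walk confined to a finite vertex set must eventually repeat a state and hence cycle; so the random basic walk from $v_0$ is transient exactly when it leaves every $D_n$, i.e.\ reaches every $S_n$. Let $v_n$ be the first vertex of $S_n$ the walk visits, entered from $v_{-1}$, and let $E_n$ be the event that the walk reaches $S_n$ and $C_{v_n}$ does not occur at $v_n$. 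Since $C_{v_n}$ occurring traps the robot and so prevents it from reaching $S_{n+1}$, transience is $E=\bigcap_n E_n$, the events are nested, and $\Pr(E)=\prod_n\Pr(E_n\mid E_{n-1})$; it therefore suffices to prove $\Pr(E_n\mid E_{n-1})\le 1-c$ for a constant $c>0$ independent of $n$.

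For the conditional bound, the key point is that the labels $C_{v_n}$ constrains have not yet been assigned at the moment $v_n$ is first reached: all arcs out of $v_n$ are unlabeled because $v_n$ is being visited for the first time, and every arc $w\to v_n$ into $v_n$ is unlabeled because, in the one-label-per-step formulation of Section~\ref{introduction}.\ref{sec:basic}, such an arc acquires a label only on the step the robot traverses it, and traversing $w\to v_n$ is itself a visit to $v_n$. Hence the labels $C_{v_n}$ requires are decided by coin flips made after the walk reaches $v_n$, and conditioning on the past (in particular on $E_{n-1}$) does not bias them, so $C_{v_n}$ occurs with conditional probability at least $p_{v_n}=\frac{1}{d(v_n)}\prod_{w\sim v_n,\,w\ne v_{-1}}\frac{1}{d(w)}$, the quantity computed before the statement. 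Now the bounded-degree hypothesis does its work: each of the at most $d(v_n)\le D$ factors is at least $1/D$, so $p_{v_n}\ge(1/D)^D$, a constant $c>0$ uniform in $n$ and in which vertex $v_n$ turns out to be. Therefore $\Pr(E_n\mid E_{n-1})\le 1\cdot(1-c)$ and $\Pr(E)\le\prod_n(1-c)=0$: the random basic walk cycles with probability $1$. As with the earlier theorems, this holds only for almost all labelings --- Example~\ref{transient} gives a labeling of $\Z^2$ on which the walk is transient.

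The step I expect to be genuinely delicate is the claim that $C_{v_n}$ occurs with conditional probability at least $p_{v_n}$, rather than with a probability that might degrade toward $0$ once $v_n$ has visited neighbors. Two obstructions must be ruled out. First, the port by which $v_n$ is entered has to be small enough for the label $i-1$ required on the arc $v_n\to v_{-1}$ to exist, which forces a comparison of $d(v_{-1})$ with $d(v_n)$ that bounded degree controls. Second, and more seriously, a previously visited neighbor $w\ne v_{-1}$ of $v_n$ may already carry, on some arc other than the (still unlabeled) arc $w\to v_n$, the very port number that $C_{v_n}$ needs $w\to v_n$ to receive, so the robot is deflected at $w$ before the trap can close; handling this comes down to showing that the $d(v_n)\le D$ available port numbers at $v_n$ always let one route $v_n\to w$ so that the return port at $w$ falls in a residue class still unused at $w$, and then accounting for the probability. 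Pinning down the honest uniform $c$ from these considerations is the crux of the proof, and it is exactly here that the hypothesis is used: an explicit graph of unbounded degree on which $\inf_v p_v=0$ and on which a transient labeling exists should accompany the theorem to show the assumption cannot be dropped.
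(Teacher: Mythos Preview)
Your shells argument differs substantially from the paper's ``Star Method''. The paper does not place traps at shell-crossing vertices. Instead it argues, using the bound $D$, that a transient walk cannot forever step to vertices of strictly increasing degree, so there are infinitely many first-visit steps $w_i\to v_i$ along the walk with $d(w_i)\ge d(v_i)$; the trap $C_{v_i}$ is sprung at these particular $v_i$. The purpose of selecting trap sites by a degree comparison along the walk, rather than by distance from $v_0$, is precisely to control the relationship between $d(v_{-1})$ and $d(v_i)$ that your first obstruction identifies --- this is the genuinely new ingredient beyond Theorems~\ref{2d}--\ref{thmregular}.

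You correctly flag that obstruction, but your handling of it --- that ``bounded degree controls'' the comparison of $d(v_{-1})$ with $d(v_n)$ --- is where the proposal breaks. Bounded degree gives only $d(v_{-1}),d(v_n)\le D$; it says nothing about which is larger. At the first shell-crossing vertex $v_n$ one may well have an entering port $i>d(v_n)+1$, and then the label $i-1$ that $C_{v_n}$ demands on $v_n\to v_{-1}$ is not even in $\{1,\dots,d(v_n)\}$, so $C_{v_n}$ as defined is impossible and its conditional probability is $0$, not $\ge(1/D)^D$. Repairing this within the shells framework would require exactly the paper's move: abandon the first shell-crossing vertex as the trap site and instead pick trap vertices along the walk according to the local degree pattern. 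Your observation that the relevant arcs into and out of $v_n$ are still unlabeled at first visit is correct and useful, but it does not by itself guarantee that the \emph{specific} port numbers $C_{v_n}$ requires are still available.
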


\begin{proof}[\textbf{Star Method}] The pigeonhole principle guarantees us that there are infinitely many vertices $v$ with a neighbor $w$ of degree $d(w)\geq d(v)$. This is because every time a vertex $v$ only has neighbors of smaller degree, all those neighbors have a neighbor (namely, $v$) with larger degree. If the random basic walk is to have any chance of escaping to infinity, then it must be the case that infinitely often the robot moves from a vertex $v$ to a vertex $w$ such that $d(v)\geq d(w)$. This uses the hypothesis of bounded degree, since the robot cannot move along a chain from a vertex of degree 1, to one of degree 2, to one of degree 3, etc. Such a chain would eventually hit a vertex of degree $D$ and then need to move to one of degree $\leq D$.

Label the steps of the robot which move from larger degree to smaller degree by $w_1\to v_1,w_2\to v_2, \dots$, where $w_i= w_j$ is allowed but $v_i\neq v_j$ for $i\neq j$ is disallowed by removing the pairs $(w_i,v_i)$ where $v_i$ has appeared in the list before. Clearly removing pairs with repeated second coordinate will not change the fact that there are infinitely many such pairs.

Whenever the random basic walk takes a step $w_i\to v_i$ from the list above, let $E_i$ be the event that the configuration $C_{v_i}$ is achieved. Let $N$ denote the set of neighbors of $v_i$ which are not $w_i$. Because $d(v_i)<D$, it is clear that $|N|<D$. Furthermore, $d(x)<D$ for all $x\in N$, so

$$\Pr(E_i) = \frac{1}{d(v)}*\prod_{x\in N(v)}\frac{1}{d(x)} > \frac{1}{D}*\left(\frac{1}{D}\right)^D$$

Denote the number on the right by $c$, and note that this number is strictly greater than 0. In order for the random basic walk to escape to infinity, the event $E_i$ must be avoided for infinitely many $v_i$. So the probability $p$ that the random basic walk escapes to infinity satisfies $p\leq (1-c)*(1-c)*\dots =0$, proving that the random basic walk cycles with probability 1.

\end{proof}

%\begin{proof} Let $S_n$ be the sphere of radius $n*M$. For a robot to escape it must reach each $S_n$ in a vertex $v_n$ which does not have a spire of length $d(v_n)<M$. As above, we require this spire to take place entirely between $D_n$ and $S_{n+1}$. If $v_n$ has no path to $S_{n+1}$ which doesn't require going back into $D_n$, then allow the walk to continue. It will either get trapped between $S_n$ and $S_{n+1}$ or it will re-enter $D_n$. If it never leaves $D_n$ again we're done, as the walk is trapped. If it reaches a different vertex on $S_n$ then define that vertex to be $v_n$ and repeat the argument. If the walk is to ever escape, there must be some $v_n \in S_n$ with a path to $S_{n+1}$ that does not pass through $D_n$. Let $c$ be the probability of a spire of length $M$, so the probability of a spire of length $d(v_n)$ is $>c$. Thus, $P(E_n|E_{n-1})\leq 1-P($spire at $v_n) \leq 1-c$. From here the proof proceeds as above, i.e. $P(E)\leq \prod{(1-c)}=0$.
%\end{proof}

It is worth noting that achieving $C_v$ may be more than is necessary, since not every neighbor of $v$ needs to be used in a cycle (see Example \ref{stargraph}). For this reason, an upper bound on the expected number of vertices visited created from the proof method above could be quite bad. Some graphs do require all the neighbors of $v$ to be involved in a trap, as discussed further in Chapter \ref{sec:complete}.

The final question we consider in this chapter is whether or not the hypotheses of the theorem can be weakened. The hypothesis of local finiteness is necessary in order to define the labeling, since arcs out of vertices of infinite degree could not be given labels in the same way. Furthermore, the following example proves that the hypothesis of bounded degree cannot be dropped. This graph has unbounded degree and the probability of a path to infinity is strictly greater than zero.

\begin{thm} \label{unboundeddegree} Let $T$ be the tree where every vertex in level $n$ has $2^n$ children. As usual, the root is in level 0.
%\xymatrix{&&&&&& && \bullet \ar@{-}[d]&&&&&&\\&&&&&& && \bullet \ar@{-}[dlll] \ar@{-}[drrr]&&&&&&\\&& && &\bullet \ar@{-}[dlll] \ar@{-}[drrr]&& && && \bullet \ar@{-}[dl] \ar@{-}[dr]&& && & && &&&\\&&\bullet \ar@{-}[dll] \ar@{-}[dl] \ar@{-}[dr] \ar@{-}[drr]&& && &&\bullet \ar@{-}[dll] \ar@{-}[dl] \ar@{-}[dr] \ar@{-}[drr]&& \vdots& &\vdots&&\\\vdots&\vdots&&\vdots&\vdots && \vdots&\vdots&&\vdots&\vdots && &&}

\xymatrix{
\bullet \ar@{-}[d] &&&&&&&\\
\bullet \ar@{-}[d] \ar@{-}[drrrr]&&&&&&&\\
\bullet \ar@{-}[d] \ar@{-}[dr]\ar@{-}[drr]\ar@{-}[drrr]&&&&\bullet\ar@{-}[d]\ar@{-}[dr]\ar@{-}[drr]\ar@{-}[drrr]&&&\\
\vdots&\vdots&\vdots&\vdots&\vdots&\vdots&\vdots&\vdots\\
&&&&&&&}

Then the random basic walk on $T$ is transient, i.e. has nonzero probability of escaping to $\infty$.

\end{thm}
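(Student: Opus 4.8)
The plan is to show the random basic walk on $T$ escapes to infinity with positive probability by constructing, level by level, a decreasing sequence of "good'' events whose intersection is nonempty with positive probability — the reverse of the multiplicative argument used in Theorems \ref{2d}, \ref{lattices}, \ref{thmregular}, and \ref{maintheorem}. There, the walk was forced into a trap because a trapping configuration occurred with probability bounded \emph{below} by a constant $c>0$, so the walk survived $n$ shells only with probability $\leq (1-c)^n \to 0$. Here the point is the opposite: because a vertex at level $n$ has $2^n$ children, the walk entering such a vertex by some port $i$ will exit by port $i+1$, and the probability that this exit port points to a child rather than back toward the root is $\tfrac{2^n}{2^n+1} = 1 - \tfrac{1}{2^n+1}$ (the vertex has $2^n$ children and one parent, hence degree $2^n+1$, and conditioned on no previously-assigned ports interfering the exit arc is uniform among the $2^n+1$ outgoing arcs). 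Moreover, when the walk first reaches a level-$n$ vertex, all of its outgoing ports are unassigned, so this probability is exact and independent of the history.

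First I would fix the starting vertex; by the usual countable-union reduction it suffices to show a positive escape probability from the root (or from any single vertex). Let $v_n$ be the vertex the walk occupies after $n$ "downward'' steps, and let $A_n$ be the event that the walk, having reached a fresh vertex $v_{n-1}$ at level $n-1$, takes its next step to a child $v_n$ at level $n$ (i.e. the exit port at $v_{n-1}$ points downward). On $A_1 \cap \dots \cap A_{n-1}$ the vertex $v_{n-1}$ is visited for the first time with all ports unassigned, so $\Pr(A_n \mid A_1 \cap \dots \cap A_{n-1}) = 1 - \tfrac{1}{2^{n-1}+1}$ — here I am using that $v_{n-1}$ lies at level $n-1$ and has $2^{n-1}$ children. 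Then
$$
\Pr\!\left(\bigcap_{n\geq 1} A_n\right) \;=\; \prod_{n\geq 1}\left(1 - \frac{1}{2^{n-1}+1}\right),
$$
and since $\sum_{n\geq 1} \frac{1}{2^{n-1}+1} < \infty$, this infinite product converges to a strictly positive limit. On $\bigcap_n A_n$ the walk visits infinitely many distinct vertices (each $v_n$ is deeper than all previous ones) and never revisits any vertex, so it is transient. Hence the escape probability is at least this positive constant, which is what the theorem asserts.

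The main obstacle — really the only subtle point — is justifying the claim that when $v_{n-1}$ is first reached, the exit arc is \emph{uniformly distributed} among its $2^{n-1}+1$ outgoing arcs, conditioned on the past. This is where I would invoke carefully the one-label-at-a-time reformulation of the random basic walk from Section \ref{introduction}.\ref{sec:basic}: the walk enters $v_{n-1}$ by some port $i$ (determined by the history), needs the arc labeled $(i \bmod d(v_{n-1}))+1$, and since no outgoing arc of $v_{n-1}$ has yet been labeled, this label is assigned to a uniformly random one of the $2^{n-1}+1$ outgoing arcs — exactly one of which goes to the parent. A small bookkeeping check is needed that the events $A_n$ as defined are nested (reaching a fresh level-$n$ vertex forces having passed through fresh vertices at all lower levels) and that the conditional probabilities genuinely telescope; both follow from the tree structure, since on $\bigcap_{k<n} A_k$ the only way to be at a level-$(n-1)$ vertex for the first time is along the path $v_0, v_1, \dots, v_{n-1}$. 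Everything else is the convergent-product estimate, which is routine.
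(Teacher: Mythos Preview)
Your approach is essentially identical to the paper's: define the event that every step of the walk moves strictly away from the root, compute the conditional probability of a downward step at each level as $1-\tfrac{1}{2^{n-1}+1}$, and show the resulting infinite product is positive. One minor slip: your formula $\Pr(A_n\mid\cdots)=1-\tfrac{1}{2^{n-1}+1}$ fails for $n=1$, since the root has no parent and hence degree $2^0=1$, not $2^0+1$; the first step is forced with probability $1$. The paper handles this explicitly by noting $\Pr(P_0)=\Pr(P_1)=1$ before the general formula kicks in at level $1$. This does not damage your conclusion (your product is still positive, just off by a factor of $2$), but the equality you wrote for $\Pr(\bigcap A_n)$ is not literally correct as stated.

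The only other difference is cosmetic: you invoke the standard criterion that $\sum a_n<\infty$ implies $\prod(1-a_n)>0$, while the paper grinds out an explicit lower bound $\Pr(P)\geq e^{-2}$ via the Taylor expansion of $\ln(1-x)$. Your route is cleaner; the paper's gives a concrete constant.
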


\begin{proof}
A random basic walk on $T$ will always have a higher probability of going downwards than of going upwards. It is clear that the random basic walk starting at the root will be more likely to cycle than a random basic walk starting at a vertex lower down on $T$, since the probability of returning will always be higher near the root. Thus, we will focus on the case where the initial vertex is the root. From the root, the initial step is determined because the root has degree 1.

Define an event $P$ to be ``for each $i$, the robot is distance $i$ away from the root at step $i$'' i.e. ``all steps are away from the root.'' Note that $\Pr($escape$)\geq \Pr(P)$ since $P$ is a way for the robot to escape. Define events $P_i$ to be ``at step $i$ the robot is distance $i$ from the root.'' Clearly, $P_0\supset P_1\supset P_2\supset \dots \supset P = \bigcap P_i$, so Pr$(P)=\prod$Pr$(P_n \;|\; P_{n-1})$. Because each vertex has only one arc pointing back at the root, $\Pr(P_0) = Pr(P_1)=1, \Pr(P_2 \;|\; P_1) = 1-\frac{1}{3} \geq 1-\frac{1}{2}, \Pr(P_3 \;|\; P_2) = 1-\frac{1}{5} \geq 1-\frac{1}{4}$, and in general

$$\Pr(P_m \;|\; P_{m-1}) = 1-\frac{1}{2^{m-1}+1}\geq 1-\frac{1}{2^{m-1}} \mbox{. Thus: } \Pr(P) \geq \prod_{m=2}^\infty{1-\frac{1}{2^{m-1}}}$$

$$ = \prod_{n=1}^\infty{1-\frac{1}{2^n}} \mbox{ and taking logs yields } \ln(\Pr(P))\geq \sum_{n=1}^\infty{\ln\left(1-\frac{1}{2^n}\right)}$$

Note that the inequality is preserved after applying $\ln$ because $\ln$ is an increasing function. Proving $\Pr(P)>0$ is equivalent to proving this sum is greater than $-\infty$. Recall the Taylor Series expansion of $\ln(1-x)$ around $0$:

$$\ln(1-x) = -x - \frac{x^2}{2} - \frac{x^3}{3} - \dots \mbox{ for } -1\leq x<1$$

Because $-1\leq 1/2^n\leq 1$ for $n\geq 1$, this equality holds. Thus:

$$\ln(\Pr(P)) \geq -\sum_{n=1}^\infty{\frac{1}{2^n}} -\frac{1}{2}\sum_{n=1}^\infty{\frac{1}{2^{2n}}} -\frac{1}{3}\sum_{n=1}^\infty{\frac{1}{2^{3n}}} -\dots-\frac{1}{k}\sum_{n=1}^\infty{\frac{1}{2^{kn}}} -\dots \geq $$

$$\sum_{n=1}^\infty{\frac{-1}{2^n}}+\sum_{n=1}^\infty{\frac{-1}{2^{2n}}}+\sum_{n=1}^\infty{\frac{-1}{2^{3n}}}\dots = \frac{-1/2}{1-1/2}+\frac{-1/2^2}{1-1/2^2}+\frac{-1/2^3}{1-1/2^3}\dots \geq \sum_{n=0}^\infty\frac{-1}{2^n} = -2$$

% Numbers are -(1/2 + 1/12 + 1/56 + ...)

To show $2^{-k}/(1-2^{-k}) \leq 2^{-k+1}$ as needed in the last inequality, note that $1-2^{-k} \geq 1/2$ so that $2^{-k}/(1-2^{-k}) \leq 2^{-k}/(2^{-1}) = 2^{-k+1}$.

Undoing the log shows that Pr(escape)$\geq \Pr(P) \geq e^{\ln(\Pr(P))} \geq e^{-2}>0$. This proves there is a positive chance that the robot escapes, i.e. the random basic walk is transient.
\end{proof}

\chapter{Random Basic Walks on Finite Graphs}
\label{sec:complete}

In this chapter we return to the question of finite graphs which originally motivated the random basic walk in \cite{supersize}. In that paper, the authors hoped that in a random basic walk the robot would do a good job exploring the finite graph $G_{k,n}$, i.e. would cover a constant fraction of the nodes. In light of the theorems in Chapter \ref{ch:infinite} this seems unlikely, as the robot does a very poor job of exploring infinite graphs. Still, in \cite{supersize}, experimental evidence is given which suggests that the random basic walk on $[n]\times [n]$ visits $1.2701 \cdot |G|^{1.8891}$ of the nodes.

This experimental evidence does not appear to be the same as $c*|G|$ so in this chapter we prove that $\mathcal{K} = \{K_n\}$ is an infinite class of graphs where the random basic walk is asymptotically expected to explore a constant fraction of the nodes. Furthermore, $\mathcal{K}$ is useful in practice because it is a class of graphs where the robot is given more freedom of movement than on square lattices, i.e. it may more closely mimic some real world applications.

\begin{thm} \label{completevertices} As $n\to \infty$, a random basic walk on $K_n$ is expected to visit at least $(1-1/e)*n$ nodes.
\end{thm}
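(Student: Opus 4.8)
The plan is to model the random basic walk on $K_n$ step by step, using the "one label at a time" formulation from Section 1.\ref{sec:basic}, and to track when the walk first gets trapped. At a vertex $v$ that has been entered $k$ times (counting the current entry), the robot has already committed $k-1$ port labels at $v$; getting trapped on this entry means the incoming port forces an outgoing arc that is already labeled and points to an already-visited vertex in a way that closes a cycle. The key observation for $K_n$ is that, because every pair of vertices is adjacent, the robot gets trapped precisely when it returns to a vertex by a port whose successor port is already assigned. So the natural quantity to control is the number of distinct vertices visited before the first such collision, and I expect this to behave like a birthday-type / coupon-collector computation, which is where the $1-1/e$ is going to come from.

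First I would set up the state of the walk after $t$ steps as the set $S_t$ of visited vertices together with, for each visited vertex, the partial port orientation assigned so far and the "current port" that will be used next if that vertex is re-entered. While the robot is at a fresh vertex, it behaves like a uniform random choice among unvisited (and visited) neighbors, and since $K_n$ is $(n-1)$-regular with a global clock (as noted before Theorem \ref{thmregular}), at global time $t$ every vertex would use port $t \bmod (n-1)$. I would then argue that, conditioned on the walk having visited $j$ distinct vertices so far with no trap, the probability that the next step lands on an already-visited vertex is about $j/(n-1)$, and that when it does land on a visited vertex the probability the configuration there forces a trap is bounded below by a constant — in fact close to $1$ for $K_n$ because almost any return closes a cycle. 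The main estimate is therefore: the number $X_n$ of distinct vertices visited before the first return is stochastically close to the first "collision time" of a process that, after $j$ successes, fails with probability $\sim j/n$. This gives $\Pr(X_n \geq m) \approx \prod_{j=0}^{m-1}(1 - j/n) \approx e^{-m^2/2n}$, so $X_n$ concentrates around $\sqrt{n}$ — which would be far too small. Hence the real point must be that the robot is \emph{not} trapped on its first return: re-entering a vertex by a new port merely forces it to leave by a new arc (an edge-self-avoiding step, as in Section \ref{sec:self}), so the walk survives many returns.

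So the corrected plan is to count returns, not first collisions. Each visited vertex $v$ can be safely re-entered up to $d(v) = n-1$ times before the pigeonhole principle forces a repeated exit arc and hence a trap; but a trap actually occurs the first time the forced outgoing port at a re-entered vertex is already assigned. I would show that as long as the number $R$ of accumulated returns is small compared to $n$, each new return lands at a vertex with few assigned ports, so the conditional trap probability per return is small, and the walk keeps visiting new vertices at a steady rate between returns. Quantitatively, I would compare the walk to the following: after visiting $j$ distinct vertices, the next vertex is new with probability $(n-1-j)/(n-1)$ and a return (to a roughly uniform earlier vertex) otherwise; a return causes a trap with probability equal to the fraction of ports already used at that vertex, which I bound by $O(R/n)$ where $R$ is the current return count. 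Summing the trap probabilities, the walk survives until it has made $\Theta(n)$ returns, by which point, since each of the first $j$ vertices is revisited $O(1)$ times on average, it has visited $j$ distinct vertices with $j$ satisfying $(n-1-j)$ comparable to $j$; the expected number of distinct vertices then works out to $(1-1/e)n$ by the standard identity $\mathbb{E}[\#\text{visited}] = \sum_{v}\Pr(v \text{ visited})$ together with $\Pr(v\text{ not visited}) \to 1/e$, exactly as in the coupon-collector heuristic where each of $n$ targets is "missed" with probability $(1-1/n)^n$.

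The main obstacle — and the part that needs the most care — is establishing that the per-return trap probability really is $O(R/n)$ \emph{uniformly}, i.e. controlling the dependence between where returns land and which ports have already been assigned there. A clean way around this is to reveal port labels lazily: never reveal a label until the robot is forced to use the corresponding port, so that at the moment of a return to $v$ the relevant successor port is either (a) unrevealed, in which case it is uniform on the remaining labels and the trap probability is the fraction of labels already forced at $v$, or (b) already revealed, which has itself only happened with small probability. This lazy-revelation coupling should let me bound the total trap probability over the first $\epsilon n$ returns by $O(\epsilon^2)$, hence show the walk reaches $\Omega(n)$ returns with probability $1-o(1)$, and then read off the $(1-1/e)n$ bound from linearity of expectation and a dominated-convergence argument on $\Pr(v \text{ not visited})$. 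I would also need the easy lemma that a trap at a re-entered vertex, once it occurs, stops the growth of the visited set, so that "expected number visited" equals "expected size of $S_\tau$" for $\tau$ the trapping time — this is immediate from the definition of cycling.
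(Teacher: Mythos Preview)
Your proposal takes a long detour around the key simplification. You even wrote down the crucial fact --- that $K_n$ is $(n-1)$-regular, so by the global clock the exit port used at step $t$ is $t \bmod (n-1)$ --- but you did not draw the immediate consequence: in the first $n-1$ steps every port number is used at most once, hence no directed arc can be traversed twice, hence \emph{no trap can occur in the first $n-1$ steps}. This single observation eliminates the entire machinery of return-counting, per-return trap probabilities, and lazy-revelation couplings that you are proposing to build.

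With trapping ruled out for the first $n-1$ steps, the paper's argument is a direct occupancy computation. Fix any $v \neq v_0$ and show that at each of the first $n-1$ steps, conditioned on the walk so far and on $v$ being unvisited, the probability the next vertex is $v$ is at least $1/(n-1)$. In fact it is exactly $1/(n-1-k)$ when the current vertex has been visited $k$ times before: the needed exit port is always fresh (by the global clock), and since $v$ has never been visited the arc to $v$ is always among the $n-1-k$ unlabeled outgoing arcs. Then $\Pr(v \text{ not visited in first } n-1 \text{ steps}) \leq (1-1/(n-1))^{n-1} \to 1/e$, and linearity of expectation finishes.

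Your return-based approach is not wrong in spirit, and something like it would indeed be needed on graphs where traps can occur early. But on $K_n$ it is unnecessary, and as you yourself flag, the asserted uniform $O(R/n)$ bound on the per-return trap probability is exactly where the dependence is delicate; the lazy-revelation coupling you sketch would require real work to make rigorous. That work is avoidable here once you exploit the global-clock observation you already had in hand.
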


\begin{proof} We prove a stronger statement, namely that $(1-1/e)*n$ of the nodes are expected to be visited within just the first $n-1$ steps. Restricting to the first $n-1$ steps guarantees that cycles are impossible, since any cycle in a $d$-regular graph a cycle requires at least $d$ arcs. Label the vertices $0,1,2,\dots,n-1$ and assume that the starting vertex $v_0$ is labeled by $0$. Consider the sequence of vertex numbers $(v_0,v_1,v_2,\dots)$ visited by the robot. This sequence cannot have adjacent numbers equal, since $K_n$ has no loops. Furthermore, this sequence cannot have a pair of adjacent numbers occur twice, since acyclicity implies no arc is traversed twice.

Let $v\neq v_0$ be a vertex chosen at random. We will prove that the probability that $v$ is visited is $\geq 1-1/e$. The probability that $v$ is visited on step 1 of the robot is $\Pr(v = v_1)=1/(n-1)$, since there are $n-1$ possible steps the robot could take after $v_0$. If $v$ is not visited in step 1, then the probability that $v$ is visited on step 2 of the robot is $\Pr(v=v_2\;|\; v\neq v_1)=1/(n-1)$. Step 3 is more complicated, because it is possible that $v_2=v_0$, in which case the arc $v_2\to v_1$ already has a label. So there are two ways to get $v=v_3$ given that $v$ has not been visited previously:

$\Pr(v=v_3) = \Pr(v=v_3\;|\;v_2=v_0)*\Pr(v_2=v_0)+\Pr(v=v_3\;|\;v_2\neq v_0)*\Pr(v_2\neq v_0)$
$$= \frac{1}{n-1}*\frac{1}{n-2} + \frac{1}{n-1}*\frac{n-2}{n-1} > \frac{1}{n-1}*\frac{1}{n-1}+\frac{n-2}{n-1}*\frac{1}{n-1} = \frac{1}{n-1}$$

The cases for $v_4,v_5,\dots$ get even more complicated, but the point remains that $1/(n-1)$ is always a lower bound on the probability that $v$ is first visited in step $i+1$. This is because if $v_i$ has been visited $k$ times before, then the existence of these previous visits rules out more of the possible arcs the robot could follow. This can only increase the probability that the next vertex to be visited is one the robot has not visited before (e.g. $v$), and this makes the messy computations summations which factor in the entire path of the robot unnecessary. Formally, the probability that $v$ is first visited on the $i+1$-st step given that $v_i$ has been visited $k$ times previously will be $1/(n-k) \geq 1/(n-1)$. This proves that $\Pr(v$ not visited on $i$-th step$) \leq 1-1/(n-1)$ for all $i$, which implies

$$\Pr(v \mbox{ not visited in first $n-1$ steps}) = \prod_{i=1}^{n-1} \Pr(v_i\neq v) \leq \left(1-\frac{1}{n-1}\right)^{n-1}$$

As $n\to \infty$, this bound tends to $1/e$, so the probability that $v$ is visited in the first $n-1$ steps is at least $1-1/e$. Thus, the expected number of vertices missed is $\leq n/e$ and the expected number of vertices visited is $\geq (1-1/e)*n$.
\end{proof}

Note that the method of proof above mimics the proof that the cover time for a simple random walk on $K_n$ is $O(n\log n)$ (see e.g. \cite{lovasz}). The theorems in Chapter \ref{ch:infinite} are evidence that the random basic walk is more constrained than the simple random walk. This fact also means that the random basic walk is better than a simple random walk when it comes to exploring $K_n$ in the first $n-1$ steps. However, the random basic walk can get stuck in a cycle and fail to visit every vertex whereas the simple random walk is guaranteed to visit every vertex with probability 1, so the improvement of the random basic walk over the simple random walk comes with a price. The following example shows that the random basic walk can fail to visit every vertex, even on $K_n$:

\begin{example} The following is a $K_{10}$, where only 10 edges are shown and these 10 edges form a cycle. The basic walk only visits half the vertices.

\xymatrix{& & \bullet \ar[drr]^3 \ar[ddr]^(.3)7& &                         & & & \bullet & & \\
\bullet \ar[urr]^2 \ar[rrrr]_9& & & & \bullet \ar[dl]^4 \ar[dlll]^(.3){10} & \bullet & & & & \bullet\\
& \bullet \ar[ul]^1 \ar[uur]^(.7)6& & \ar[ll]^5 \ar[ulll]^(.7)8\bullet &   & & \bullet & & \bullet & }
\end{example}

Having considered how the random basic walk on $K_n$ visits vertices, we shift our attention to how it samples from the set of labeled arcs. A cycle occurs exactly when an arc is traversed in the same direction for the second time, and we would like to know how many arcs we can expect the robot will traverse before this occurs. Experimental evidence privately communicated by Sunil Shende suggests the following conjecture:

\begin{conjecture} The expected number of arcs traversed by a random basic walk on $K_n$ is $1.8*n$ as $n\to \infty$.
\end{conjecture}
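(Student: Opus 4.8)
The plan is to first reformulate the process so that the stopping event is transparent, and then analyze the number of arcs traversed as a survival time governed by a slowly‑varying hazard rate. Using the one‑label‑at‑a‑time formulation from Section 1.\ref{sec:basic}, observe that on $K_n$ every vertex has degree $n-1$, so there is a global clock: starting in state $(v_0,1)$, after $k$ steps the robot is in state $(v_k,i_k)$ with $i_k=(k \bmod (n-1))+1$. A directed arc is re‑traversed exactly when a state $(v,i)$ is revisited, and a state is revisited exactly when the robot stands at a vertex $v_k$ it occupied at some earlier step $m<k$ with $k\equiv m \pmod{n-1}$. Crucially, at every step before this occurs the out‑arc the robot wants to use is necessarily still unlabeled (the only way it could be labeled is through a prior visit to that very state), so $v_{k+1}$ is chosen uniformly at random from $S_k := V\setminus(\{v_k\}\cup R_k)$, where $R_k$ is the set of neighbours to which $v_k$ already has a labeled arc (the exits used on earlier visits to $v_k$). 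Writing $T_n$ for the number of arcs traversed, $T_n$ is thus the least $k$ with $v_k\in\mathrm{Tgt}_k:=\{v_m: m<k,\ n-1\mid k-m\}$, and the goal is the convergence of $\mathbb{E}[T_n]/n$.

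The second step is to compute the per‑step hazard. Conditioned on the entire history through step $k$, the probability that the walk stops at step $k+1$ is the deterministic quantity $|\mathrm{Tgt}_{k+1}\cap S_k|/|S_k|$. Here $|\mathrm{Tgt}_{k+1}|=\lfloor (k+1)/(n-1)\rfloor$ is a fixed small integer, so the heart of the matter is to establish that (i) $|S_k|=n(1+o(1))$ with high probability, which follows once one knows $|R_k|=o(n)$ uniformly up to time $T_n$, and (ii) almost all of the targets $v_{k+1-r(n-1)}$ lie in $S_k$, i.e. coincide neither with $v_k$ nor with one of the $O(1)$ vertices of $R_k$. Granting (i) and (ii), the hazard at step $k$ is $\lfloor (k+1)/(n-1)\rfloor/n\cdot(1+o(1))$; since stopping is impossible for $k<n-1$ (the first $n-1$ steps are always cycle‑free, exactly as in the proof of Theorem \ref{completevertices}), the survival function should satisfy
\[
\Pr(T_n > \tau n)\ \longrightarrow\ \exp\!\left(-\int_1^{\tau}\lfloor s\rfloor\,ds\right)
\]
for every $\tau\ge 1$. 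Summing $\Pr(T_n>K)$ over $K$ then yields
\[
\frac{\mathbb{E}[T_n]}{n}\ \longrightarrow\ 1+\sum_{j\ge 1} e^{-j(j-1)/2}\,\frac{1-e^{-j}}{j}\ \approx\ 1.808,
\]
which agrees with, and slightly corrects, the conjectured value $1.8$.

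Concretely I would proceed in this order: (a) prove the deterministic reformulation and the ``fresh out‑arc'' fact above; (b) establish a crude a‑priori bound $T_n=O(n)$ with high probability, so only $O(n)$ steps need be controlled, and that no vertex is visited more than $o(n)$ times before $T_n$, giving (i); (c) show that each past vertex $v_m$ is within $O(1/n)$ in total variation of uniform on $V$ and is asymptotically decoupled from the small forbidden set $S_k$, which gives (ii) in expectation and, via a second‑moment estimate, with high probability; (d) upgrade the pointwise hazard estimate to the survival‑function limit using a Doob/martingale decomposition to absorb the dependence between increments, plus uniform integrability to pass from convergence in law of $T_n/n$ to convergence of its mean.

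The main obstacle is step (c). The targets $v_k,v_{k-(n-1)},v_{k-2(n-1)},\dots$ are vertices the walk genuinely occupied, and the walk is not a sequence of independent uniform samples: it is biased toward unvisited vertices and retains long‑range memory through the port labels it has already placed. One must argue that on the relevant scale ($O(n)$ steps, $O(1)$ visits per vertex) this memory is too weak to produce a macroscopic correlation between a target and the current forbidden set $S_k$ — an exchangeability argument among the as‑yet‑untouched vertices, or a coupling with a variant in which the port orientation at each vertex is resampled on arrival, seems the most promising route. Should the correlations prove non‑negligible, the true constant would depart from $1+\sum_{j\ge1}e^{-j(j-1)/2}(1-e^{-j})/j$, and one would have to track the joint law of the targets rather than treating them as uniform, which would make the computation of the constant substantially harder.
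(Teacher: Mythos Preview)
Your plan is essentially the same as the paper's sketch, though more fully articulated. The paper does not prove this statement (it is left as a conjecture); instead it introduces an auxiliary process $Z$ in which one records pairs $z_k=(k\bmod n,\,r_k)$ with $r_k$ uniform on $[n]$ and stops at the first repeated pair, observes that a recursion for $Z$ yields expected stopping time matching the experimental $\approx 1.8\,n$, and then says ``to formalize the connection between $Z$ and the random basic walk, a coupling argument should be used.'' Your hazard model is exactly the continuous-limit version of $Z$: a step-$k$ hazard of $\lfloor k/(n-1)\rfloor/n$ is precisely the collision probability in $Z$, and your survival-function computation $1+\sum_{j\ge 1}e^{-j(j-1)/2}(1-e^{-j})/j\approx 1.808$ recovers (and makes explicit) the constant the paper only reports numerically. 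What the paper calls the missing ``coupling argument'' is your step (c)--(d): showing that the same-phase targets $v_{k-r(n-1)}$ behave, for the purposes of collision with the small forbidden set, as if they were independent uniform vertices. Neither the paper nor your proposal completes this step; you correctly identify it as the main obstacle, and your suggestion of an exchangeability argument among untouched vertices (or a coupling with a ``resample on arrival'' variant) is a natural way to attempt it, more concrete than anything the paper offers.

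One small correction worth flagging in your write-up: the paper's ``$1.8$'' is a rounding of the experimental/recursive value, not an exact rational, so your ``slightly corrects'' should perhaps be ``makes precise.'' Also, in your step (b), be careful that the a priori bound $T_n=O(n)$ w.h.p.\ does not itself presuppose the hazard estimate you are trying to establish; a safe route is to note that once $k\ge n-1$ the hazard is at least $1/(n-1)$ regardless of history (there is always at least one target in $S_k$, and $|S_k|\le n-1$), which already gives a geometric tail and hence $T_n=O(n\log n)$ w.h.p., enough to feed into (i).
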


Before remarking on how one might prove this conjecture, we discuss a different way to prove Theorem \ref{completevertices}. First, consider the following process $X$ to create a sequence $(x_n)$:

\begin{itemize}
  \item $x_1=1$
  \item Choose $x_2$ uniformly at random from $\{2,\dots,n\} = [n]-\{x_1\}$
  \item Choose $x_3$ uniformly at random from $[n]-\{x_2\}$
  \item Choose $x_4$ uniformly at random from $[n]-\{x_3\}$
  \item Continue in this way, stopping at $x_n$
\end{itemize}

This process is very related to the process which creates the sequence $(v_n)$, i.e. the sequence obtained by following the random basic walk and writing down labels for vertices visited. This is because when the robot first reaches $x_k$ (entering by label $i_k$), there are $n-1$ neighbors and all are equally likely to have arc $x_k\to x_{k+1}$ be labeled by $(i_k \mod{n})+1$. Let $V_i$ be the number of vertices visited after the $i$-th step, i.e. the number of unique integers appearing in $(v_1,\dots,v_i)$. Let $X_i$ be the number of unique integers appearing in $(x_1,\dots,x_i)$.

The Occupancy Problem states that the expected value of $X_n$ is $(1-1/e)*n$. The difference between the random basic walk process and $X$ is that if the random basic walk has visited $x_k$ before, then the probabilities are different because now at least one of the arcs leaving $x_k$ cannot receive the new label. Because these probabilities are not very different, $\Pr(V_k = c)$ and $\Pr(X_k = c)$ only differ by a small amount. It is possible that a coupling argument can be used to show that the expected value of $V_n = (1-1/e)*n$. Basic information about coupling can be found in Chapters 4 and 5 of \cite{mixing}. The first step if one wished to pursue this method would be to compute the total variation distance between the two distributions.

%Unfortunately, the naive idea fails and we CAN'T SAY ANYTHING ABOUT E(X) vs. E(V) because we need to know about all the $X_i$ and $V_i$, not just the maximums. What stops you is Wald's Theorem on Stopping Times of Martingales.

We now return to the conjecture, which concerns the sequence $((v_1,\ell_1),(v_2,\ell_2),\dots)$ of vertices and ports. The data of a pair $(v_i,\ell_i)$ is exactly the data of a labeled arc. One way to find the first instance of a repeated $(v_i,\ell_i)$ pair, is via a coupling argument with the following process $Z$ from elementary probability theory:

\begin{itemize}
  \item Select an integer $r_1$ uniformly at random from $[n]$ and set $z_1=(1,r_1)$
  \item Select an integer $r_2$ uniformly at random from $[n]$ and set $z_2=(2,r_2)$
  \item Continue in this way all the way up to $z_n = (n,r_n)$
  \item Select an integer $r_{n+1}$ uniformly at random from $[n]$ and set $z_{n+1}=(1,r_{n+1})$
  \item Continue in this way up to $z_{2n} = (n,r_{2n})$
  \item Keep going until some $z_k$ has already appeared in the list, then stop.
\end{itemize}

This process $Z$ is similar to the Birthday Problem if it were being run in $n$ different rooms with stopping criterion given by a match in any room. In the first $n$ steps there are no chances for repetition, because of the first coordinate in each $z_i$. Between step $n+1$ and step $2n$, every step brings a probability of $1/n$ of a repetition. After step $2n$ (if the walk gets that far), each step brings a probability of $2/n$ of a repetition. The distribution $Z$ is something like a sequence of geometric distributions where the probabilities of repetition change every $n$ steps. A recursion discovered by Danny Krizanc and Sunil Shende proves that the expected $k$ such that $Z$ terminates is exactly the number $1.8*n$ suggested by the experiments on $K_n$ (up to 10 digits). To formalize the connection between the process $Z$ and the process given by the random basic walk, a coupling argument should be used.

\chapter{Conclusion}
\label{conclusion}

We have introduced the notion of a random basic walk of \cite{supersize} to infinite graphs $G$, have introduced an equivalent formulation of how the labeling is done in the random basic walk so that it is allowed to set labels one at a time rather than all at once, have catalogued potential applications of the random basic walk, and have provided detailed comparisons between the random basic walk and existing generalizations of simple random walks. Furthermore, we have introduced the notion of a random rotor router, and have discussed notions of transience for rotor routers and self-avoiding random walks which do not seem to appear in the literature elsewhere.

We have defined analogues in the setting of random basic walks of the recurrence and transience properties of simple random walks, have proven a theorem which states that any graph of bounded degree has a cycling random basic walk, and have shown that these hypotheses cannot be removed. We have studied numerous examples of the type of behavior which can occur, and have demonstrated that the cycling and transience results only hold with probability 1. From the theorems above, we have extracted upper bounds on the expected number of vertices a random basic walk will visit on $G$, and these bounds will apply to the graphs $G_{k,n}$ of \cite{supersize} for $k$ and $n$ sufficiently large.

We have also extended the knowledge about random basic walks on finite graphs, in particular proving that $\{K_n\;|\; n\in \N\}$ is an infinite class of graphs on which random basic walks asymptotically visit a constant fraction of the nodes. We have stated a conjecture based on experimental evidence which regards the asymptotic expected number of arcs traversed in $\{K_n\;|\; n\in \N\}$, and have sketched how a proof of this conjecture might proceed. We end now by stating several problems regarding the random basic walk on finite graphs which are still open, and discussing why these questions are interesting.

\begin{question}
In $\Z^2$, what is the expected length of time (i.e. the expected number of steps taken) before the random basic walk hits a cycle? What is the expected size of a cycle? What about these questions on $G_{k,n}$? What about the expected maximum length of a random basic walk on $G_{k,n}$?
\end{question}

If the first two questions could be answered, then we would also know the expected number of vertices visited, and potentially the expected number of arcs used. An interesting invariant to study for the random basic walk might be the expected number of arcs used per vertex visited. These same questions are of course also interesting on $G_{k,n}$, and the answers should be related as $k,n\to \infty$. Indeed, the values for $G_{k,n}$ are already bounded from above by the proof method of Theorem \ref{2d}.

Proposition \ref{prop:pathlength} and Proposition \ref{prop:cyclelength} show that results on $\Z^2$ cannot be used to answer the final question above or even to give a nontrivial bound. However, for fixed $k$, the theory of percolation may be useful as a way of proving $n$ is a lower bound on the expected maximum distance the random basic walk will travel on $G_{k,n}$. The theory of percolation discusses when a particle will be able to move from the left-most column to the right-most column, which requires at least $n$ steps. Currently, no nontrivial bounds are known.

One way to proceed on the first two problems is to find the probability that a given vertex $v$ (with starting port $i$) is part of a cycle, i.e. a random basic walk out of $v$ will return to $v$ along an arc labeled $i-1$. Due to the symmetry in $\Z^2$ this number $p$ will be the same for every $v$. One way to compute $p$ is to look at all possible walks using just three vertices and cycling, then all using four vertices and cycling, etc. The author hopes that there will be a pattern after enough small examples are worked out, and suspects that Young tableaux will be of use.

%Note: Tableaux is the plural of Tableau in the UK. In America we use Tableaus. It comes from the French for table, but can be used in the English language, especially in mathematics

%An example of finding cycles using just three vertices follows, where arc labels should be taken mod 4.

%\xymatrix{v_0 \ar[r]_i & v_1 \ar@/_1pc/[l]^{i+3} \ar[r]_{i+1} & v_2 \ar@/_1pc/[l]^{i+2}}

%The initial $i$ does not matter, nor does the direction used from $v_1$ to reach $v_2$ (whether east, north, or south). Whichever way $i+1$ points will be the one chosen. Thus, all we require is that the label from $v_2$ to $v_1$ be $i+2$ (this occurs with probability $1/4$), and that the label from $v_1$ to $v_0$ be $i+3$ (which occurs with probability $1/3$). So the probability of a trapping configuration isomorphic to this one (i.e. a star at $v_1$) is $1/12$. We must next consider traps using four vertices, e.g the square, then using five, etc.
%Let $E_{n,v}$ be the event that vertex $v$ is involved in a cycle of length $<n$, i.e. that there is no path out of $v$ of length $n$. Let $c_n$ be the probability of this event ($c_n$ does not depend on $v$ by symmetry). Because there is a constant, non-zero probability of a path which moves directly away from $v$ for $n$ steps, the number $c_n$ is less than 1. Thus, the probability that there does not exists a vertex with a path out of it of length $n$ is $c_n*c_n*c_n*\dots = 0$. Note that there is a labeling where every vertex is in a 4-cycle, but this labeling occurs with probability zero.

If one can compute the probabilities $p_i$ of traps using $i$ vertices, then the expected length of a cycle is $\sum{k*p_k}$ as $k$ runs through $\N$. Furthermore, $p = \sum p_i$ and knowledge of $p$ can be used to find the expected length of a walk before the random basic walk falls into a cycle, i.e. before a vertex $v$ is reached via a label which causes a cycle at $v$. The step of the walk before reaching $v$ is a step $w\to v$ such that $w$ will never be visited again, but $v$ will be visited infinitely many times. The goal of understanding cycles better leads to the following two questions:

\begin{question}
How does punching large holes in the graph $\Z^2$ affect the expectations above?
\end{question}

This question is of interest because it relates to the initial motivation for considering the random basic walk, i.e. a robot vacuuming a room. The large holes will represent furniture which the robot must move around. Furthermore, it is expected that having holes in the graph might lead to long cycles which move around those holes.

\begin{problem}
Study the random basic walk with the constraint that there are no vertices which have two different incoming arcs with the same label.
\end{problem}

This problem is interesting because the basic walk falls into a cycle at $v$ if and only if $v$ has the property disallowed above. If such labelings are disallowed then the basic walk can never fall into a cycle, but cycles are still possible if the initial vertex $v_0$ with the initial label is part of a cycle. The constrained random basic walk has never been studied, and questions of cycling and transience are completely open.

\begin{problem}
Determine whether or not there is a graph $G$, analogous to the Rado graph, such that every finite graph with local orientations sits inside $G$ in the same way that finite graphs sit inside the Rado graph, i.e. such that any automorphism of a finite graph with labeling extends to an automorphism of $G$. Determine whether or not $G$ is unique up to isomorphism.
\end{problem}

In order for this problem to be solved, a good notion of graph homomorphism for graphs with local orientations would need to be developed. The Rado graph is constructed as a Fra\"{i}ss\'{e} Limit, and this construction might also work to construct $G$, but the details would need to be checked.

The next three problems are motivated by the theory of simple random walks on finite graphs, as discussed in Section \ref{introduction}.\ref{sec:randomwalks}.

\begin{problem} Create and study an analogue of the hitting time for the random basic walk.
\end{problem}

The author suggests the following: either the probability $p_1^v$ of reaching a given vertex $v$ before returning to the starting vertex or the probability $p_2^v$ of reaching $v$ at all. For simple random walks, computation of hitting time it is often related to the Dirichlet problem for harmonic functions, and this relationship is used to give a bound on the hitting time in \cite{whipple}. Related to the problem above are the following two problems:

\begin{problem} Create and study an analogue of the mixing time for the random basic walk.
\end{problem}

\begin{problem} Create and study an analogue of load balancing for the random basic walk.
\end{problem}

The second problem seems more tractable and should be related to the number of vertices a random basic walk is expected to visit.

Several questions were mentioned in the main body of the thesis, and we repeat them here for completeness.

\begin{question} Is there a labeling on $\Z^2$ such that the basic walk from any initial $v_0$ and any initial port will visit every vertex? Is there an infinite family of such labelings? What about for $\Z^d$ with $d>2$?
\end{question}

This question was originally mentioned in Section \ref{sec:examples}.\ref{integerlabelings}. Another question which arose in Section \ref{ch:infinite}.\ref{sec:regular} was to find analogues of the labelings given for $\Z^d$ which work on the hexagonal lattice:

\begin{question} On the hexagonal lattice, are there labelings which allow the basic walk to escape to infinity from any starting vertex and with any initial port? Are there labelings such that the basic walk visits every vertex regardless of where it starts? Is there a method of constructing such examples for general graphs $G$?
\end{question}

It is likely the desired labelings can be found for the hexagonal lattice, especially the first one asked for above. It seems unlikely there is any construction which will hold for general graphs $G$. The following is the last question from the main body of the thesis, which appeared in Section \ref{ch:history}.\ref{sec:quasirandom}:

\begin{problem} Study random rotor routers and determine if they bear any resemblance to the random basic walk.
\end{problem}

We hope that at least some of these questions and problems will be pursued and will yield interesting answers.

\backmatter

%(CHECK: GKLT)

%*~*~*~*~*~*~*~*~*~*~*~*~*~*~*~*~*~*~*~*~*~*~*~*~*~*~*~*~*~*~*~*~*~*~*~*~*~*~*~*~*~*~*~*~*~*~*~*~*~*~*~*~*
%                                                                                                        %
%                                   B  I  B  L  I  O  G  R  A  P  H  Y                                   %
%                                                                                                        %
%*~*~*~*~*~*~*~*~*~*~*~*~*~*~*~*~*~*~*~*~*~*~*~*~*~*~*~*~*~*~*~*~*~*~*~*~*~*~*~*~*~*~*~*~*~*~*~*~*~*~*~*~*

\end{document}